\newcommand{\Rmnum}[1]{\expandafter\@slowromancap\romannumeral #1@}
\newtheorem{theorem}{Theorem}
\newtheorem{definition}{Definition}
\newtheorem{remark}{Remark}
\providecommand{\propositionname}{Proposition}
\patchcmd{\maketitle}{\@fnsymbol}{\@alph}{}{}  % Footnote numbers from symbols to small letters
\title{Decentralized Caching and Coded Delivery with Distinct Cache Capacities}
\author{
  Mohammad Mohammadi Amiri,~\IEEEmembership{Student Member,~IEEE}, Qianqian Yang,~\IEEEmembership{Student Member,~IEEE},\thanks{The authors are with Imperial College London, London SW7 2AZ, U.K. (e-mail: m.mohammadi-amiri15@imperial.ac.uk; q.yang14@imperial.ac.uk; d.gunduz@imperial.ac.uk).} and~Deniz~G\"und\"uz,~\IEEEmembership{Senior Member,~IEEE}
}
\date{}
\begin{document}

\maketitle
\vspace{-.6in}
\begin{abstract}
Decentralized proactive caching and coded delivery is studied in a content delivery network, where each user is equipped with a cache memory, not necessarily of equal capacity. Cache memories are filled in advance during the off-peak traffic period in a \textit{decentralized} manner, i.e., without the knowledge of the number of active users, their identities, or their particular demands. User demands are revealed during the peak traffic period, and are served simultaneously through an error-free shared link. The goal is to find the minimum delivery rate during the peak traffic period that is sufficient to satisfy all possible demand combinations. A group-based decentralized caching and coded delivery scheme is proposed, and it is shown to improve upon the state-of-the-art in terms of the minimum required delivery rate when there are more users in the system than files. Numerical results indicate that the improvement is more significant as the cache capacities of the users become more skewed. A new lower bound on the delivery rate is also presented, which provides a tighter bound than the classical cut-set bound.  
\end{abstract}

\begin{IEEEkeywords}
Coded caching, decentralized caching, distinct cache capacities, network coding, proactive caching.
\end{IEEEkeywords}

\section{Introduction}\label{Intro}
The\makeatletter{\renewcommand*{\@makefnmark}{}
\footnotetext{This work received support from the European Union's H2020 Research and Innovation Programme through project TACTILENet: Towards Agile, effiCient, auTonomous and massIvely LargE Network of things (agreement 690893), and from the European Research Council (ERC) through Starting Grant BEACON (agreement 677854).}\makeatother} ever-increasing mobile data traffic is imposing a great challenge on the current network architectures. The growing demand has been typically addressed by increasing the achievable data rates; however, moving content to the network edge has recently emerged as a promising alternative solution as it reduces both the bandwidth requirements and the delay. In this paper, we consider an extreme form of edge caching, in which contents are stored directly at user terminals in a proactive manner. Proactive caching of popular contents, e.g., trending Youtube videos, episodes of popular TV series, during off-peak traffic periods also helps flattening the high temporal variability of traffic \cite{DowdyCaching,AlmerothCacing}.     

In this proactive caching model \cite{MaddahAliCentralized}, the \textit{placement phase} takes place during off-peak traffic hours when the resources are abundant, without the knowledge of particular user demands. When the user demands are revealed, the \textit{delivery phase} is performed, in which a common message is transmitted from the server to all the users over the shared communication channel. Each user decodes its requested file by combining the bits received in the delivery phase with the contents stored in its local cache. Cache capacities are typically much lower than the size of the whole database, and a key challenge is to decide how to fill the cache memories without the knowledge of the user demands in order to minimize the \textit{delivery rate}, which guarantees that all the user demands are satisfied, independent of the specific demand combination across the users. Maddah-Ali and Niesen showed in \cite{MaddahAliCentralized} that by storing and transmitting coded contents, and designing the placement and delivery phases jointly, it is possible to significantly reduce the delivery rate compared to uncoded caching. 

A \textit{centralized} caching scenario is studied in \cite{MaddahAliCentralized}, in which the number and the identities of the users are known in advance by the server. This allows coordination of the cache contents across the users during the placement and delivery phases, such that by carefully placing pieces of contents in user caches a maximum number of multicasting opportunities are created to be exploited during the delivery phase. Many more recent works study centralized coded caching, and the required delivery rate has been further reduced \cite{ZhiChenXOR,MohammadDenizISITA,MohammadDenizTCom,TianCentralizedCachingNew,MohammadQianDenizITW, Gomez_16}.

In practice, however, the number or identities of active users that will participate in the \textit{delivery phase} might not be known in advance during the \textit{placement phase}. In such a scenario, called \textit{decentralized caching}, coordination across users is not possible during the \textit{placement phase}. For this scenario Maddah-Ali and Niesen proposed caching an equal number of random bits of each content at each user, and showed that one can still exploit multicasting opportunities in the delivery phase, albeit limited compared to the centralized counterpart \cite{MaddahAliDecentralized}. \textit{Decentralized caching} has been studied in various other settings, for example, with files with different popularities \cite{NiesenNonuniform, JiArXivNonuniform}, and distinct lengths \cite{ZhangDistinctFileSizes}, for online coded caching \cite{PedarsaniOnlineCaching}, coded caching of files with lossy reconstruction \cite{QianDenizLossyJournal}, as well as delivering contents from multiple servers over an interference channel \cite{Joan_ICC17}, and in the presence of a fading delivery channel \cite{HuangFadingChannelcodedcaching}. 

Most of the literature on coded caching assume identical cache capacities across users. However, in practice users access content through diverse devices, typically with very different storage capacities. Centralized caching with distinct cache capacities is studied in \cite{QianDenizLossyJournal} and \cite{Ibrahim_WCNC17}. Recently, in \cite{WangHeterogenous}, decentralized caching is studied for heterogeneous cache capacities; and by extending the scheme proposed in \cite{MaddahAliDecentralized} to this scenario, authors have shown that significant gains can still be obtained compared to uncoded caching despite the asymmetry across users. In this paper, we propose a novel decentralized caching and delivery algorithm for users with distinct cache capacities. We show that the proposed scheme requires a smaller delivery rate than the one achieved in \cite{WangHeterogenous} when there are more users in the system than the number of files in the library, while the same performance is achieved otherwise. This scenario is relevant when a few popular video files or software updates are downloaded by many users within a short time period. Simulation results illustrate that the more distinct the cache capacities of the users are, which is more likely to happen in practice, the higher the improvement (with respect to \cite{WangHeterogenous}). We also derive an information-theoretic lower bound on the delivery rate building upon the lower bound derived in \cite{SenguptaCaching} for homogeneous cache capacities. This provides a lower bound on the delivery rate that is tighter than the classical cut-set bound.      

The rest of this paper is organized as follows. The system model is introduced in Section \ref{SystemModel}. In Section \ref{s:Results}, we present the proposed caching scheme as well as a lower bound on the delivery rate- cache capacity trade-off. The performance of the proposed coded caching scheme is compared with the state-of-the-art result analytically, and some numerical results are presented in Section \ref{s:Comparison}. We  conclude the paper in Section \ref{Conc}. The detailed proofs are given in the Appendices.

\textit{Notations:} The set of integers $\left\{ i, ..., j \right\}$, where $i \le j$, is denoted by $\left[ i:j \right]$. We denote the sequence $Y_{i}, Y_{i+1}, \dots, Y_{j-1}, Y_{j}$ shortly by $Y_{[i:j]}$. $\binom{j}{i}$ represents the binomial coefficient. For two sets $\mathcal{Q}$ and $\mathcal{P}$, $\mathcal{Q} \backslash \mathcal{P}$ is the set of elements in $\mathcal{Q}$ that do not belong to $\mathcal{P}$. Notation $\left| \cdot \right|$ represents cardinality of a set, or the length of a file. Notation $\oplus$ refers to bitwise XOR operation, while $\bar \oplus$ represents bitwise XOR operation where the arguments are first zero-padded to have the same length as the longest argument. Finally, $\left\lfloor x \right\rfloor $ denotes the floor function; and ${\left( x \right)^ + } \buildrel \Delta \over = \max \left\{ x,0 \right\}$.  

\section{System Model}\label{SystemModel}

A server with a content library of $N$ independent files $W_{[1:N]}$ is considered. All the files in the library are assumed to be of length $F$ bits, and each of them is chosen uniformly randomly over the set $\left[1:2^F \right]$. There are $K$ active users, $U_{[1:K]}$, where $U_k$ is equipped with a cache memory of capacity $M_{k}F$ bits, with $M_k < N$, $\forall k$\footnote{If $M_k \ge N$, for $k \in \left[ 1:K \right]$, $U_k$ has enough memory to cache all the database; so, $U_k$ does not need to participate in the delivery phase.}. Data delivery is divided into two phases. User caches are filled during the \textit{placement phase}. Let $Z_k$ denote the contents of $U_k$'s cache at the end of the placement phase, which is a function of the database $W_{[1:N]}$ given by ${Z_k} = {\phi _k}\left( {W_{[1:N]}} \right)$, for $k \in \left[ {1:K} \right]$. Unlike in centralized caching \cite{MaddahAliCentralized}, cache contents of each user are independent of the number and identities of other users in the system. User requests are revealed after the placement phase, where $d_k \in \left[ 1:N \right]$ denotes the demand of $U_k$, for $k \in \left[ {1:K} \right]$. These requests are served simultaneously through an error-free shared link in the \textit{delivery phase}. The $RF$-bit message sent over the shared link by the server in response to the user demands $d_{[1:K]}$ is denoted by $X$, where $X \in [1:2^{RF}]$, and it is generated by the encoding function $\psi$, i.e., $X=\psi \left(W_{[1:N]}, d_{[1:K]}\right)$. $U_k$ reconstructs its requested file $W_{d_k}$ after receiving the common message $X$ in the delivery phase along with its cache contents $Z_k$. The reconstruction at $U_k$ for the demand combination $d_{[1:K]}$ is given by ${\hat W_{{d_k}}} = {\rho _k}\left(Z_k, X, d_{[1:K]}\right)$, $\forall k \in \left[1:K \right]$, where ${\rho _k}$ is the decoding function at user $U_k$. For a given content delivery network, the tuple $\left( {\phi _{[1:K]},\psi ,\rho_{[1:K]}} \right)$ constitute a caching and delivery code with delivery rate $R$. We are interested in the \textit{worst-case} delivery rate, that is the delivery rate that is sufficient to satisfy all demand combinations. Accordingly, the error probability is defined over all demand combinations as follows.

\begin{definition}
The error probability of a $\left( {\phi_{[1:K]},\psi ,\rho_{[1:K]}} \right)$ caching and delivery code described above is given by
\begin{equation}\label{ErrorProbabilityFunction}
P_e \triangleq \mathop {\max }\limits_{ d_{[1:K]} \in [1:N]^{K} } \Pr \left\{ {\mathop  \bigcup \limits_{k = 1}^K \left\{ {{{\hat W}_{d_k}} \ne {W_{{d_k}}}} \right\}} \right\}.
\end{equation}
\end{definition}

\begin{definition}
For a content delivery network with $N$ files and $K$ users, we say that a cache capacity-delivery rate tuple $\left( {M_{[1:K]} ,R} \right)$ is achievable if, for every $\varepsilon  > 0$, there exists a caching and delivery code $\left( {{\phi _{[1:K]}},\psi ,{\rho_{[1:K]}}} \right)$ with error probability ${P_e} < \varepsilon$, for $F$ large enough.    
\end{definition}

There is a trade-off between the achievable delivery rate $R$ and the cache capacities $M_{[1:K]}$, defined as
\begin{equation}\label{DeliveryRateCacheCapacityTradeoffDefinition}
{R^*}\left( M_{[1:K]}  \right) \buildrel \Delta \over = \min \left\{ \mbox{$R:\left( {M_{[1:K]} ,R} \right)$ is achievable} \right\}.
\end{equation}
In this paper, we present upper and lower bounds on this trade-off.

\section{The Group-Based Decentralized Caching (GBD) Scheme}\label{s:Results}

Here, we present the proposed \textit{group-based decentralized (GBD) caching scheme}, first for uniform cache capacities, and then extend it to the scenario with distinct cache capacities.

\subsection{Uniform Cache Capacities}\label{UniformScenario}
Here we assume that each user has the same cache capacity of $MF$ bits, i.e., $M_1 = \cdots = M_K = M$.  

\underline{\textbf{Placement phase:}} In the placement phase, as in  \cite{MaddahAliDecentralized}, each user caches a random subset of $MF/N$ bits of each file independently. Since there are $N$ files, each of length $F$ bits, this placement phase satisfies the memory constraint.

For any set $\mathcal{V} \subset \left[ {1:K} \right]$, $W_{i,\mathcal{V}}$ denotes the bits of file $W_i$ that have been \textit{exclusively} cached by the users in set $\mathcal{V}$, that is, $W_{i,\mathcal{V}} \subset Z_k$, $\forall k \in \mathcal{V}$, and $W_{i,\mathcal{V}} \cap  Z_k = \emptyset$, $\forall k \in \left[ 1:K \right] \backslash \mathcal{V}$. For any chosen bit of a file, the probability of having been cached by any particular user is $M/N$. Since the contents are cached independently by each user, a bit of each file is cached exclusively by the users in set $\mathcal V \subset \left[ 1:K \right]$ (and no other user) with probability $\left( M/N \right)^{\left| \mathcal V \right|} \left( {1 - M/N} \right)^{K - \left| \mathcal V \right|}$.

\underline{\textbf{Delivery phase:}} Without loss of generality, we order the users such that the first $K_1$ users, referred to as group ${\cal G}_1$, demand $W_1$, the next $K_2$ users, referred to as group ${\cal G}_2$, request $W_2$, and so on so forth. We define ${S_i} \buildrel \Delta \over = \sum\limits_{l = 1}^i {{K_l}}$, which denotes the total number of users in the first $i$ groups. Hence, the user demands are as follows:
\begin{equation}\label{DemandsGeneralUniformCase} d_k = i,\quad \mbox{for $i=1, ..., N,$ and $k=S_{i-1} + 1, ..., S_i$},
\end{equation} 
where we set $S_0 = 0$.

There are two alternative delivery procedures, called CODED DELIVERY and RANDOM DELIVERY, presented in Algorithm \ref{DeliveryDecentralizedUniform}. The server follows the one that requires a smaller delivery rate. We present the CODED DELIVERY procedure of Algorithm \ref{DeliveryDecentralizedUniform} in detail, while we refer the reader to \cite{MaddahAliDecentralized} for the RANDOM DELIVERY procedure, as we use the same procedure in \cite{MaddahAliDecentralized} for the latter.

\begin{algorithm}[!t]
\caption{Coded Delivery Phase for Uniform Cache Capacities Scenario}
\label{DeliveryDecentralizedUniform}
\begin{algorithmic}[1]
\Statex
\Procedure {CODED DELIVERY}{}
\State{\textbf{Part 1}: Delivering bits that are not in the cache of any user}
\For {$i = 1, \ldots, N$}
\State{send $ W_{{d_{S_{i-1} + 1}}, \emptyset}$}
\EndFor
\Statex
\State{\textbf{Part 2}: Delivering bits that are in the cache of only one user}
\State{send $ {\bigcup\limits_{i = 1}^N {\bigcup\limits_{k = {S_{i - 1}} + 1}^{{S_i} - 1} {\left( {{W_{i,\left\{ k \right\}}} \oplus {W_{i,\left\{ k + 1 \right\}}}} \right)} } } $}
\State{send $\bigcup\limits_{i = 1}^{N - 1} \bigcup\limits_{j = i + 1}^N \left( \bigcup\limits_{k = {S_{j - 1}} + 1}^{{S_j} - 1} {\left( {{W_{i,\left\{ k \right\}}} \oplus {W_{i,\left\{ {k + 1} \right\}}}} \right)} ,\qquad \qquad \qquad \qquad \qquad \qquad \qquad \right.$ $\left. \qquad \qquad \qquad \quad \quad \;\;  \bigcup\limits_{k = {S_{i - 1}} + 1}^{{S_i} - 1} {\left( {{W_{j,\left\{ k \right\}}} \oplus {W_{j,\left\{ {k + 1} \right\}}}} \right)} ,\right.$ $\left. \qquad \qquad \qquad \quad \qquad \left( {W_{i,\left\{ {{S_{j-1}+1}} \right\}}} \oplus W_{j,\left\{ {{S_{i-1}+1}} \right\}} \right) \Bigg) \right.$}
\Statex
\State{\textbf{Part 3}: Delivering bits that are in the cache of more than one user}
\For{$\mathcal V \subset \left[ 1:K \right]: 3 \le \left| \mathcal V \right| \le K$ }
\State send ${\bigoplus} _{v \in \mathcal V} W_{d_{v}, \mathcal V \backslash \left\{v\right\}}$
\EndFor
\EndProcedure
\Statex
\Procedure {RANDOM DELIVERY}{}\cite{MaddahAliDecentralized}
\For{$i = 1, \ldots, N$}
\State {send sufficient random linear combinations (for reliable decoding) of the bits of file $W_i$ to the users requesting it}
\EndFor
\EndProcedure
\end{algorithmic}
\end{algorithm}

The main idea behind the CODED DELIVERY procedure is to deliver each user the missing bits of its request that have been cached by $i$ user(s), $\forall i \in \left[ 0:K-1 \right]$. In the first part, the bits of each request that are not in the cache of any user are directly delivered. Each transmitted content is destined for all the users in a distinct group, which have the same request.

In part 2, the bits of each request that have been cached by only one user are served. Note that, for any $i \in \left[1:N \right]$, each user $U_k$ in ${\cal G}_i$, $k \in \left[ S_{i-1}+1:S_i \right]$, demands $W_i$ and has already cached $W_{i,\left\{ k \right\}}$. Thus, having received the bits delivered in line 7 of Algorithm \ref{DeliveryDecentralizedUniform}, $U_k$ can recover $W_{i,\left\{ l \right\}}$, $\forall l \in \left[ S_{i-1}+1:S_i \right]$, i.e., the bits of $W_i$ cached by all the other users in the same group. 

With the contents delivered in line 8 of Algorithm \ref{DeliveryDecentralizedUniform}, each user can decode the subfiles of its requested file, which have been cached by users in other groups. Consider the users in two different groups ${\cal G}_i$ and ${\cal G}_j$, for $i=1, ..., N-1$ and $j=i+1, ..., N$. All users in ${\cal G}_i$ can recover subfile $W_{j,\{S_i\}}$ after receiving $\bigcup\limits_{k = {S_{i - 1}} + 1}^{{S_i} - 1}  {{W_{j,\left\{ k \right\}}} \oplus {W_{j,\left\{ {k + 1} \right\}}}}$. Thus, they can obtain all subfiles $W_{i,\{l\}}$, $\forall l \in \left[ S_{j-1}+1:S_j \right]$, i.e., subfiles of $W_i$ having been cached by users in ${\cal G}_j$, after receiving $W_{i, \{S_{j-1}+1\}} \oplus W_{j, \{S_{i-1}+1\}}$ and $\mathop  \bigcup \limits_{k = S_{j - 1} + 1}^{S_j - 1} {{W_{i,\{k\}}} \oplus {W_{i,\{k + 1\}}}}$. Similarly, all users in ${\cal G}_j$ can recover $W_{i,\{S_j\}}$ after receiving $\bigcup\limits_{k = {S_{j - 1}} + 1}^{{S_j} - 1} { {{W_{i,\left\{ k \right\}}} \oplus {W_{i,\left\{ {k + 1} \right\}}}}}$. Hence, by receiving $W_{i, \{S_{j-1}+1\}} \oplus W_{j, \{S_{i-1}+1\}}$ and $\mathop  \bigcup \limits_{k = S_{i - 1} + 1}^{S_i - 1} {{W_{j,\{k\}}} \oplus {W_{j,\{k + 1\}}}}$, all users in ${\cal G}_j$ can recover all the subfiles $W_{j,\{l\}}$, $\forall l \in \left[ S_{i-1}+1:S_i \right]$, i.e., subfiles of $W_j$ that have been cached by users in ${\cal G}_i$. 

In the last part, the same procedure as the one proposed in \cite{MaddahAliDecentralized} is performed for the missing bits of each file that have been cached by more than one user. Consider any subset of users $\mathcal V \subset \left[ 1:K \right]$, such that $3 \le \left| \mathcal V \right| \le K$. Each user $v \in \cal V$ can recover subfile $W_{d_{v}, \mathcal V \backslash \left\{v\right\}}$ after receiving the coded message delivered through line 11 of Algorithm \ref{DeliveryDecentralizedUniform}. Hence, together with the local cache content and the contents delivered by the CODED DELIVERY procedure in Algorithm \ref{DeliveryDecentralizedUniform}, each user can recover its desired file.

\begin{figure*}[!t]
\centering
\includegraphics[scale=0.61,trim={0 576pt 0 114pt},clip]{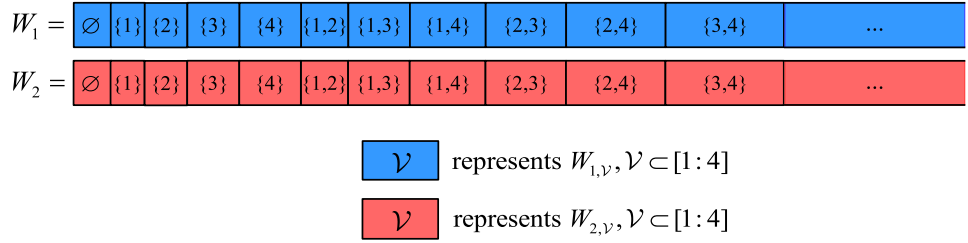}
\caption{Illustration of the subfiles, each corresponding to the bits of the file cached by a different subset of users.}
\label{DivideFileExample}
\end{figure*}

\underline{\textbf{Comparison with the state-of-the-art:}} Here we compare the delivery rate of the proposed GBD scheme with that of the scheme proposed in \cite[Algorithm 1]{MaddahAliDecentralized} for uniform cache capacities scenario. The RANDOM DELIVERY procedure in Algorithm \ref{DeliveryDecentralizedUniform} is the same as the second delivery procedure of \cite[Algorithm 1]{MaddahAliDecentralized}. Thus, we focus only on the CODED DELIVERY procedure, and compare it with the first delivery procedure in \cite[Algorithm 1]{MaddahAliDecentralized}. The two procedures differ in the first and second parts. Consider a demand combination $d_{[1:K]}$ with $N'$ different requests, i.e.,  
\begin{equation}\label{EachDemandCombinationGeneralUniformCase} d_k = i,\quad \mbox{for $i=1, ..., N,$ and $k=S_{i-1} + 1, ..., S_i$},
\end{equation} 
such that $S_i > 0$, for $i=1, ..., N'$, and $S_i = 0$, for $i=N'+1, ...,N$. In the first part of CODED DELIVERY procedure in Algorithm \ref{DeliveryDecentralizedUniform}, the bits of each $N'$ different requested files, which have not been cached by any user, are delivered. A total of $N'(1-M/N)^KF$ bits are delivered in this part. On the other hand, in the first delivery procedure in \cite[Algorithm 1]{MaddahAliDecentralized}, a total number of $K(1-M/N)^KF$ bits are delivered to serve the users with the bits which are not available in the cache of any user. From the fact that $N' \le \min\{ N,K \}$ for any demand combination, the required number of bits delivered over the shared link in the first part of the CODED DELIVERY procedure in Algorithm \ref{DeliveryDecentralizedUniform} is smaller than or equal to that of the equivalent part in \cite[Algorithm 1]{MaddahAliDecentralized}. We further note that, if $N < K$, CODED DELIVERY procedure in Algorithm \ref{DeliveryDecentralizedUniform} delivers strictly less bits than \cite[Algorithm 1]{MaddahAliDecentralized} for this part of the delivery phase.

Next, we consider the second part of the CODED DELIVERY procedure. It is shown in Appendix \ref{ProofRUc} that a total of $N'\left( K - (N'+1)/2 \right)$ coded contents, each of length $(M/N)(1-M/N)^{K-1}F$ bits, are delivered in the second part of the CODED DELIVERY procedure of the GBD scheme, leading to a delivery rate of 
\begin{equation}\label{DeliveryRateRUc}
R_{\rm{GBD}}^{\rm{U}} \buildrel \Delta \over = N' \left( K - \frac{N'+1}{2} \right)\left(\frac{M}{N}\right)\left(1-\frac{M}{N}\right)^{K-1}.
\end{equation}
On the other hand, the first delivery procedure in \cite[Algorithm 1]{MaddahAliDecentralized} sends a total of $\binom{K}{2}$ coded contents in order to serve each user with the bits that have been cached by another user, leading to a delivery rate of
\begin{equation}\label{DeliveryRateRUb}
R_{b}^{\rm{U}} \buildrel \Delta \over = \frac{K \left( K-1 \right)}{2}\left(\frac{M}{N}\right)\left(1-\frac{M}{N}\right)^{K-1}.
\end{equation}
Since, $N' \le K$, we have $R_{\rm{GBD}}^{\rm{U}} \le R_{b}^{\rm{U}}$, where the equality holds only if $N'=K$ and $N'=K-1$. Thus, for the case $N < K-1$, which results in $N' < K-1$, we have $R_{\rm{GBD}}^{\rm{U}} < R_{b}^{\rm{U}}$, in which case the second part of the CODED DELIVERY procedure of the GBD scheme requires a smaller delivery rate than that of \cite[Algorithm 1]{MaddahAliDecentralized}.         

\begin{remark}\label{RemarkComparisonUniform}
The scheme proposed in \cite{MaddahAliDecentralized} treats the users with the same demand as any other user, and it delivers the same number of bits for any demand combination; that is, for demand combination $d_{[1:K]}$, and any non-empty subset of users $\mathcal V \subset [1:K]$, it sends the coded content \begin{equation}\label{UniformSchemeStateofhteArt}
{\bigoplus} _{v \in \mathcal V} W_{d_{v},\left\{ \mathcal V \right\}\backslash \left\{v\right\}},
\end{equation}
of length $(M/N)^{\left| \mathcal V \right|-1}(1-M/N)^{K-\left| \mathcal V \right|+1}$, regardless of the redundancy among user demands. Instead, the proposed scheme treats the users with the same demand separately when delivering the bits of each file, and does not deliver redundant bits for the same demand.
\end{remark}

\subsection{Distinct Cache Capacities}\label{NonuniformScenario}
In this section, we extend the proposed GBD scheme to the scenario with distinct cache capacities. We start with an illustrative example. It is then generalized to an arbitrary network setting. A new lower bound on the delivery rate is also obtained.

\theoremstyle{definition}
\newtheorem{exmp}{Example}

\begin{exmp}\label{DecDistCacheSizes}
Consider $N = 2$ files, $W_1$ and $W_2$, and $K = 4$ users. Let the cache capacity of $U_k$ be given by ${M_k} = {\left( {1/2} \right)^{4 - k}}M_{\rm{max}}$, $\forall k \in \left[ 1:4 \right]$.

%\begin{figure}[!t]
%\centering
%\includegraphics[scale=0.35,trim={0 576pt 0 114pt},clip]{Divide_File_Example.eps}
%\caption{Illustration of the subfiles, each corresponding to the bits of the file cached by a different subset of users.}
%\label{DivideFileExample}
%\end{figure}

In the placement phase, $U_k$ caches a random subset of $M_{k}F/2$ bits of each file independently. Since there are $N=2$ files in the database, a total of $M_{k}F$ bits are cached by $U_k$, filling up its cache memory. File $W_i$ can be represented by
\begin{equation}\label{FileRepresentationEmaple}
{W_i} = \left( {{W_{i,\mathcal V}}:\forall \mathcal V \subset \left[ {1:4} \right]} \right), \quad \mbox{for $i=1, 2$}.
\end{equation}
An illustration of the subfiles, each cached by a different subset of users, is depicted in Fig. \ref{DivideFileExample}, and each user's cache content after the placement phase is shown in Fig. \ref{CacheContentExample}.

When $N<K$, it can be shown that the worst-case demand combination is the one when each of the $N$ users with the smallest cache capacities requests a different file. For this particular example, we have $M_1 \le \cdots \le M_4$, and accordingly, we have the worst-case demand combination when users $U_1$ and $U_2$, i.e., $N = 2$ users with the smallest cache capacities, request distinct files. Hence, we can assume the worst-case demand combination of $d_1 = d_3 = 1$ and $d_2 = d_4 = 2$.

As explained in Section \ref{UniformScenario}, the delivery phase consists of three distinct parts, where the bits delivered in part $i$, $i=1, 2, 3$, are denoted by $X(i)$, such that $X=\left( X(1), X(2), X(3) \right)$. Below, we explain the purpose of each part in detail.

\begin{figure}[!t]
\centering
\includegraphics[scale=0.4,trim={78pt 111pt 54pt 262pt},clip]{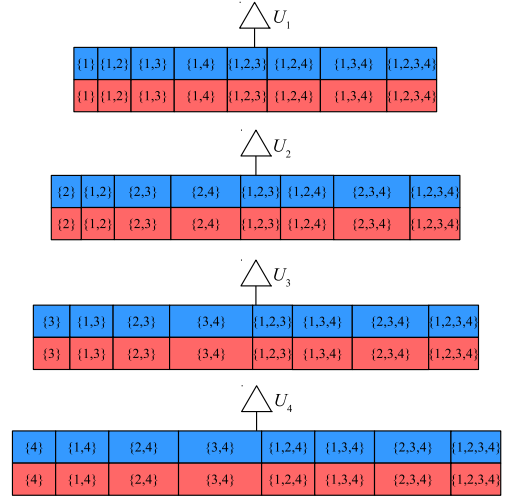}
\caption{Cache contents of users $U_{[1:4]}$ after the placement phase.} 
\label{CacheContentExample}
\end{figure}

\begin{enumerate}[label=\bfseries Part \arabic*:,align=left]

\item In the first part of the delivery phase, the bits of each requested file which have not been cached by any user are directly delivered. For the example above, the following contents are delivered: $X(1)=\left( W_{1,\emptyset}, W_{2,\emptyset} \right)$.

\item The bits of the requested files, which have been cached exclusively by a single user (other than the requester) are sent in the second part of the delivery phase. The server first delivers each user the bits of its requested file which are exclusively in the cache of one user with the same request. Then, each user receives the bits of its requested file which are only in the cache of a single user with a different request. In our example, with the following bits transmitted over the shared link, $U_k$ can recover all the bits of its request $W_{d_k}$, which have been cached exclusively by $U_l$, for $l \ne k$, $k,l \in \left[1:4\right]$: $X(2) = \left( W_{1,\left\{ 3 \right\}} \bar \oplus W_{1,\left\{ 1 \right\}}, W_{2,\left\{ 4 \right\}} \bar \oplus W_{2,\left\{ 2 \right\}}\right.$, $W_{1,\left\{ 4 \right\}} \bar \oplus W_{1,\left\{ 2 \right\}}$, $W_{2,\left\{ 3 \right\}} \bar \oplus W_{2,\left\{ 1 \right\}}$, $\left. W_{1,\left\{ 2 \right\}} \bar \oplus W_{2,\left\{ 1 \right\}} \right)$. The coded content delivered in parts 1 and 2 of the delivery phase have been illustrated in Fig. \ref{DeliveryPartsOneTwoExample}. 

\begin{figure}[!t]
\centering
\includegraphics[scale=0.4,trim={14pt 92pt 3pt 263pt},clip]{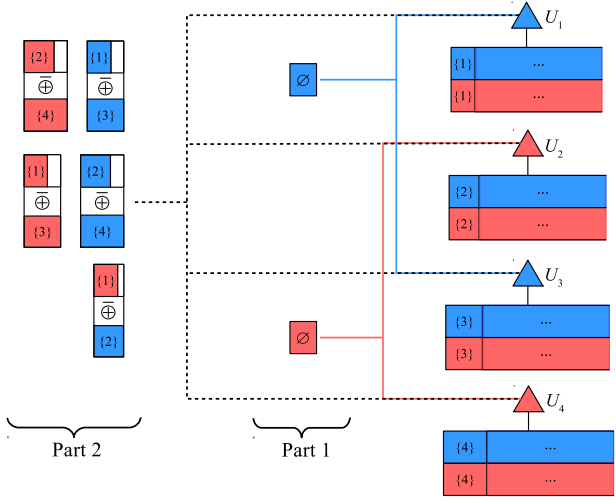}
\caption{Illustration of the coded contents delivered in parts 1 and 2 of the delivery phase for demand combination $d_1=d_3=1$ and $d_2=d_4=2$, where the cached contents are shown in Fig. \ref{CacheContentExample}.} 
\label{DeliveryPartsOneTwoExample}
\end{figure}

\item In the last part, the server delivers the users the bits of their requested files which have been cached by more than one other user. Accordingly, $U_k$, $\forall k \in \left[ 1:4 \right]$, can obtain all the bits of file $W_{d_k}$, which are in the cache of users in any set $\mathcal{S} \subset \left[ {1:4} \right]\backslash \left\{ k \right\}$, where $\left| \mathcal{S} \right| \ge 2$. For the example above, the following contents, illustrated in Fig. \ref{DeliveryPartsThreeExample}, are transmitted over the shared link: $X(3)$ $=$ $\left( W_{1,\left\{ 2,3 \right\}} \right.$ $\bar \oplus$ $W_{2,\left\{ 1,3 \right\}}$ $\bar \oplus$ $W_{1,\left\{ 1,2 \right\}}$, $W_{1,\left\{ 2,4 \right\}}$ $\bar \oplus$ $W_{2,\left\{ 1,4 \right\}}$ $\bar \oplus$ $W_{2,\left\{ 1,2 \right\}}$, $W_{1,\left\{ 3,4 \right\}}$ $\bar \oplus$ $W_{1,\left\{ 1,4 \right\}}$ $\bar \oplus$ $W_{2,\left\{ 1,3 \right\}}$, $W_{2,\left\{ 3,4 \right\}}$ $\bar \oplus$ $W_{1,\left\{ 2,4 \right\}}$ $\bar \oplus$ $W_{2,\left\{ 2,3 \right\}}$, $W_{1,\left\{ 2,3,4 \right\}}$ $\bar \oplus$ $W_{2,\left\{ 1,3,4 \right\}}$ $\bar \oplus$ $W_{1,\left\{ 1,2,4 \right\}}$ $\bar \oplus$ $\left. W_{2,\left\{ 1,2,3 \right\}} \right)$. \end{enumerate}

\begin{figure}[!t]
\centering
\includegraphics[scale=0.38,trim={5pt 111pt 5pt 264pt},clip]{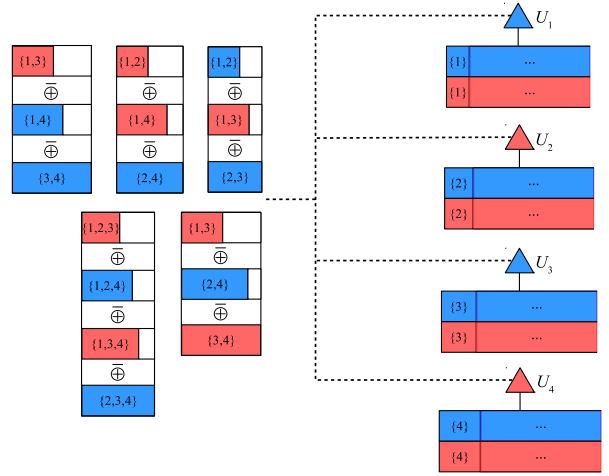}
\caption{Illustration of the coded contents delivered in part 3 of the delivery phase for demand combination $d_1=d_3=1$ and $d_2=d_4=2$, where the cached contents are shown in Fig. \ref{CacheContentExample}.} 
\label{DeliveryPartsThreeExample}
\end{figure}

After receiving these three parts, each user can decode all the missing bits of its desired file. To find the delivery rate, we first note that, for $F$ large enough, by the law of large numbers, the length of subfile $W_{k,\mathcal{V}}$, for any set $\mathcal{V} \subset \left[ {1:4} \right]$, is approximately given by
\begin{equation}\label{SizeSubfileExample} \left| {{W_{k,\mathcal{V}}}} \right| \approx \prod\limits_{i \in \mathcal{V}} {\left( {\frac{{{M_i}}}{2}} \right)} \prod\limits_{j \in \left[ {1:4} \right]\backslash \mathcal{V}} {\left( {1 - \frac{{{M_j}}}{2}} \right)} F, \quad \forall k \in \left[ {1:4} \right].
\end{equation} 
Note that, due to the $\bar \oplus$ operation, the lengths of the delivered segments, e.g., $W_{1,\left\{ 2,3 \right\}}$  $\bar \oplus$ $W_{2,\left\{ 1,3 \right\}}$ $\bar \oplus$ $W_{1,\left\{ 1,2 \right\}}$, are given by the lengths of its longest arguments, i.e., $|W_{1,\left\{ 2,3 \right\}}$ $\bar \oplus$ $W_{2,\left\{ 1,3 \right\}}$ $\bar \oplus$ $W_{1,\left\{ 1,2 \right\}}|$ $=|W_{1,\left\{ 2,3 \right\}}|$. The delivery rate is given by the total rate of all the transmitted file segments listed above. When $M_{\rm{max}} = 1$, i.e., $M_{[1:K]}  = \left( {1/8,1/4,1/2,1} \right)$, the delivery rate is $1.758$, while the scheme in \cite{WangHeterogenous} would require a delivery rate of $2.681$. The GBD scheme provides a $34.43\%$ reduction in the delivery rate compared to \cite{WangHeterogenous} in this example.  
\qed
\end{exmp}

Next, we present our caching and coded delivery scheme for the general case for arbitrary numbers of users and files, followed by the analysis of the corresponding delivery rate.

\underline{\textbf{Placement phase:}} In the placement phase, $U_k$ caches a random subset of $M_{k}F/N$ bits of each file independently, for $k=1, ..., K$. Since there are $N$ files in the database, a total of $M_{k}F$ bits are cached by $U_k$ satisfying the cache capacity constraint with equality. Since each user fills its cache independently, a bit of each file is cached exclusively by the users in set $\mathcal V \subset \left[ 1:K \right]$ (and no other user) with probability ${\prod\limits_{i \in \mathcal V} {\left( M_i/N \right)} } {\prod\limits_{j \in [1:K] \backslash \mathcal V} {\left(1 - M_j/N \right)} }$.

\underline{\textbf{Delivery phase:}}
We apply the same re-labeling of users into groups based on their requests as in Section \ref{UniformScenario}. We remind that the user demands are as follows:
\begin{equation}\label{DemandsGeneralCase} d_k = i,\quad \mbox{for $i=1, ..., N,$ and $k=S_{i-1} + 1, ..., S_i$},
\end{equation} 
where users $S_{i-1}, \ldots, S_i$ form group $\mathcal G_i$. We further order the users within a group according to their cache capacities, and assume, without loss of generality, that ${M_{{S_{i - 1}} + 1}} \le {M_{{S_{i - 1}} + 2}} \le \cdots \le {M_{{S_i}}}$, for $i= 1, \ldots ,N$.

The delivery phase of the proposed GBD scheme for distinct cache capacities is presented in Algorithm \ref{DeliveryHeterogenous}. As in Section \ref{UniformScenario}, it has two distinct delivery procedures, CODED DELIVERY and RANDOM DELIVERY; and the server chooses the one with the smaller delivery rate.

The CODED DELIVERY procedure in Algorithm \ref{DeliveryHeterogenous} follows the similar steps as the CODED DELIVERY procedure in Algorithm \ref{DeliveryDecentralizedUniform}, except that $\oplus$ is replaced with $\bar \oplus$, due to the asymmetry across the users' cache capacities, and consequently, the size of the cached subfiles by different users. We remark that the correctness of the CODED DELIVERY in Algorithm \ref{DeliveryHeterogenous} follows similarly to the correctness of the CODED DELIVERY procedure in Algorithm  \ref{DeliveryDecentralizedUniform}.

\begin{remark}\label{CodedDeliveryRemarkDistinct}
Note that $W_{i,\{ S_{j-1}+1 \}}$, $\forall i,j \in [1:N]$ such that $i \ne j$, is the smallest subfile of $W_i$ cached exclusively by one user in $\mathcal G_j$. We also note that by sending any coded content $W_{i,\{ k_1 \}} \bar \oplus W_{j,\{ k_2 \}}$, $k_1 \in \left[ S_{j-1}+1:S_j \right]$ and $k_2 \in \left[ S_{i-1}+1:S_i \right]$, instead of $W_{i,\left\{S_{j-1}+1\right\}} \bar \oplus W_{j,\left\{S_{i-1}+1\right\}}$ in $X(2,2)$, for $i=1, ..., N-1$ and $j=i+1, ..., N$, the user demands can still be satisfied. However, $W_{i,\left\{S_{j-1}+1\right\}} \bar \oplus W_{j,\left\{S_{i-1}+1\right\}}$ has the smallest length among all coded contents $W_{i,\{ k_1 \}} \bar \oplus W_{j,\{ k_2 \}}$, $\forall k_1 \in \left[ S_{j-1}+1:S_j \right]$ and $\forall k_2 \in \left[ S_{i-1}+1:S_i \right]$, which results in a smaller delivery rate.  
\end{remark}

%includes three distinct parts, where the content delivered in part $i$ is denoted by $X(i)$, $i=1,2,3$, such that the common message $X=\left( X(1), X(2), X(3) \right)$ is sent to all the users during the delivery phase. The message transmitted in part 2, $X(2)$, is further divided into two pieces $X(2,1)$ and $X(2,2)$, i.e., $X(2)=\left( X(2,1), X(2,2) \right)$. Based on the aforementioned placement phase, the main motivation of the CODED DELIVERY procedure is to enable each user to recover the missing bits of its requested file which have been cached exclusively by $k$ other users, $\forall k \in \left[ 0:K-1 \right]$.  

\begin{algorithm}[t]
\caption{Coded Delivery Phase for Distinct Cache Capacities Scenario}
\label{DeliveryHeterogenous}
\begin{algorithmic}[1]
\Statex
\Procedure {Coded Delivery}{}
\State{\textbf{Part 1}: Delivering bits that are not in the cache of any user}
\For {$i = 1, 2, \ldots, N$}
\State{$X(1)=\left( W_{{d_{S_{i-1} + 1}},\emptyset}  \right)$}
\EndFor
\Statex
\State{\textbf{Part 2}: Delivering bits that are in the cache of only one user}
\State{$X(2,1)=\left( {\bigcup\limits_{i = 1}^N {\bigcup\limits_{k = {S_{i - 1}} + 1}^{{S_i} - 1} {\left( {{W_{i,\left\{ k \right\}}} \bar \oplus {W_{i,\left\{ k + 1 \right\}}}} \right)} } } \right)$}
\State{$X(2,2)=\bigcup\limits_{i = 1}^{N - 1} \bigcup\limits_{j = i + 1}^N \left( \bigcup\limits_{k = S_{j - 1} + 1}^{S_j - 1} \left( W_{i,\left\{ k \right\}} \bar \oplus W_{i,\left\{ {k + 1} \right\}} \right) ,\qquad \qquad \qquad \qquad \qquad \qquad \qquad \qquad \qquad \right.$ $ \left. \qquad \qquad \;\; \; \qquad \qquad \qquad \bigcup\limits_{k = S_{i - 1} + 1}^{S_i - 1} \left( W_{j,\left\{ k \right\}} \bar \oplus W_{j,\left\{ {k + 1} \right\}} \right), \right.$  $\left. \qquad \qquad \; \;  \qquad \qquad \qquad \left( W_{i,\left\{S_{j-1}+1\right\}} \bar \oplus W_{j,\left\{S_{i-1}+1\right\}} \right) \Bigg) \right.$} 
\Statex
\State{\textbf{Part 3}: Delivering bits that are in the cache of more than one user}
\For{$\mathcal V \subset \left[ 1:K \right]: 3 \le \left| \mathcal V \right| \le K$ }
\State $X(3) = {\overline  \bigoplus  } _{v \in \mathcal V} W_{d_{v}, \mathcal V \backslash \left\{v\right\}}$
\EndFor
\EndProcedure
\Statex
\Procedure {Random Delivery}{}
\For{$i = 1, 2, \ldots, N$}
\State {send enough random linear combinations of the bits of $W_{i}$ to enable the users demanding it to decode it}
\EndFor
\EndProcedure
\end{algorithmic}
\end{algorithm}

In the RANDOM DELIVERY procedure, as in the second delivery procedure of \cite[Algorithm 1]{MaddahAliDecentralized}, the server transmits enough random linear combinations of the bits of file $W_i$ to the users in group $\mathcal{G}_i$ such that they can all decode this file, for $i=1, \ldots, N$.

\underline{\textbf{Delivery rate analysis:}} Consider first the case $N \ge K$. It can be argued in this case that the worst-case user demands happens if each file is requested by at most one user. Hence, by re-ordering the users, for the worst-case user demands, we have $K_i = 1,$ for $1 \le i \le K$, and $K_i = 0,$ otherwise. In this case, it can be shown that the CODED DELIVERY procedure requires a lower delivery rate than the RANDOM DELIVERY procedure; hence, the server uses the former. In this case, it is possible to simplify the CODED DELIVERY procedure such that, only coded message $X(2) = \bigcup\limits_{i = 1}^{N - 1} \bigcup\limits_{j = i + 1}^N W_{i,\left\{S_{j-1}+1\right\}} \bar \oplus W_{j,\left\{S_{i-1}+1\right\}}$ is transmitted in Part 2. The corresponding common message $X=\left( X(1), X(2), X(3) \right)$ transmitted over the CODED DELIVERY procedure reduces to the delivery phase of \cite[Algorithm 2]{WangHeterogenous}. Thus, the GBD scheme achieves the same delivery rate as \cite[Algorithm 2]{WangHeterogenous} when $N \ge K$. 

Next, consider the case $N<K$. It is illustrated in Appendix \ref{Proof} that the worst-case user demands happens when $N$ users with the smallest cache capacities all request different files, i.e., they end up in different groups. The corresponding delivery rate is presented in the following theorem, the proof of which can also be found in Appendix \ref{Proof}. 

\begin{theorem}\label{TheDelRateDecDistCacheSizes}
In a decentralized content delivery network with $N$ files in the database, each of size $F$ bits, and $K$ users with cache capacities $M_{[1:K]}$ satisfying $M_1 \le M_2 \le \cdots \le M_K$, the following delivery rate-cache capacity trade-off is achievable when $N<K$:
\begin{align}\label{OurDeliveryRateHeterogenous} 
& R_{\rm{GBD}}\left(M_{[1:K]}  \right) = \min \left\{ \sum\limits_{i = 1}^K {{\prod\limits_{j = 1}^i {\left( {1 - \frac{{{M_j}}}{N}} \right)} }} \right.\nonumber\\
& \left. - \Delta {R_1}\left(M_{[1:K]}  \right) - \Delta {R_2}\left( M_{[1:K]}  \right),\sum\limits_{i = 1}^N \left( {1 - \frac{{{M_i}}}{N}} \right)  \right\},
\end{align}
where
\begin{subequations}
\label{DeltaR}
\begin{align}\label{DeltaRone}
\Delta {R_1}\left( M_{[1:K]}  \right) \buildrel \Delta \over = & \left( {K - N} \right)\prod\limits_{l = 1}^K {\left( {1 - \frac{{{M_l}}}{N}} \right)},\\
\Delta {R_2}\left( M_{[1:K]}  \right) \buildrel \Delta \over =& \left[ {\sum\limits_{k = 1}^{K - N} {(k-1) {\frac{  M_{k + N} }{{N - {M_{k + N}}}}} } } \right]\prod\limits_{l = 1}^K {\left( {1 - \frac{{{M_l}}}{N}} \right)}.
\label{DeltaRtwo}
\end{align}
\end{subequations}
\end{theorem}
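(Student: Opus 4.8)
The plan is to evaluate the rate of each of the two procedures in Algorithm~\ref{DeliveryHeterogenous} under the worst-case demand and then take the minimum. I would start with the worst-case reduction: among all demand vectors $d_{[1:K]}$ the rate of \emph{each} procedure is maximized when the $N$ users with the smallest cache capacities, $U_1,\dots,U_N$, request pairwise distinct files and hence fall into $N$ distinct groups. For \textsc{Random Delivery} this is immediate, because the number of random combinations needed for group $\mathcal{G}_i$ is dictated by the weakest member of $\mathcal{G}_i$, and $1-M/N$ decreases in $M$, so the rate is largest when the $N$ weakest users are exactly the $N$ group representatives. For \textsc{Coded Delivery} I would use an exchange argument: if two of $U_1,\dots,U_N$ were in a common group, moving one of them into a group whose current weakest member has a larger cache cannot shorten any transmitted segment, since every $\bar\oplus$ block has length equal to a $\max$ of subfile lengths of the form $\prod_{j\in\mathcal{V}}(M_j/N)\prod_{l\notin\mathcal{V}}(1-M_l/N)\,F$, and such a move only enlarges the indices appearing in those sets. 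I then fix this demand and relabel so that $U_i$ is the weakest member of $\mathcal{G}_i$ for $i\le N$, with $U_{N+1},\dots,U_K$ distributed among the groups.

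Under this demand, \textsc{Random Delivery} needs $(1-M_i/N)F$ bits for group $\mathcal{G}_i$, since its weakest member $U_i$ has cached $M_iF/N$ bits of $W_i$; summing over $i=1,\dots,N$ gives the rate $\sum_{i=1}^{N}(1-M_i/N)$, the second term in the minimum. For \textsc{Coded Delivery} I would write $b_l:=1-M_l/N$, $a_l:=M_l/N$, $c_l:=a_l/b_l$, and record the elementary length facts that $|W_{k,\emptyset}|\approx(\prod_l b_l)F$, that $|W_{k,\mathcal{V}}|\approx(\prod_{j\in\mathcal{V}}a_j)(\prod_{l\notin\mathcal{V}}b_l)F$, and that the length of ${\overline\bigoplus}_{v\in\mathcal{V}}W_{d_v,\mathcal{V}\setminus\{v\}}$ is this quantity divided by $c_{\min\mathcal{V}}$ — the smallest-cache user in $\mathcal{V}$ — because $c_l$ is nondecreasing in $l$ and, within every index range appearing in Algorithm~\ref{DeliveryHeterogenous}, users are listed by nondecreasing cache size (this is precisely the point exploited in Remark~\ref{CodedDeliveryRemarkDistinct}). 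The rate of \textsc{Coded Delivery} is then the sum of the contributions of Parts~1, 2 and~3, where Part~3 transmits ${\overline\bigoplus}_{v\in\mathcal{V}}W_{d_v,\mathcal{V}\setminus\{v\}}$ over all $\mathcal{V}$ with $|\mathcal{V}|\ge 3$, while Parts~1 and~2 are the group-aware substitutes for the ``$|\mathcal{V}|\le 2$'' transmissions.

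The central intermediate step is a clean identity: if, as in the scheme of \cite{WangHeterogenous} (itself an extension of \cite{MaddahAliDecentralized}), one transmitted ${\overline\bigoplus}_{v\in\mathcal{V}}W_{d_v,\mathcal{V}\setminus\{v\}}$ for \emph{every} nonempty $\mathcal{V}$ (with $|\mathcal{V}|=1$ giving the uncached parts $W_{d_v,\emptyset}$), the total rate equals exactly $\sum_{k=1}^{K}\prod_{j=1}^{k}b_j$; this follows by partitioning the subsets according to their minimum element $m$ and using $\sum_{\mathcal{W}\subseteq\{m+1,\dots,K\}}\prod_{j\in\mathcal{W}}c_j=\prod_{j=m+1}^{K}(1+c_j)=\prod_{j=m+1}^{K}b_j^{-1}$ together with $a_j+b_j=1$, and this total is demand-independent since every block length depends only on the caching set $\mathcal{V}\setminus\{v\}$. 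It then remains to account for the two savings of the group-based construction relative to this baseline. Part~1 of Algorithm~\ref{DeliveryHeterogenous} sends only $N$ (rather than $K$) uncached subfiles, each of length $(\prod_l b_l)F$, hence saves $(K-N)(\prod_l b_l)F=\Delta R_1 F$. Part~2 replaces all $|\mathcal{V}|=2$ transmissions by the chain-structured collection $X(2,1),X(2,2)$; a counting argument shows it uses $\binom{K-N}{2}$ fewer $\bar\oplus$ blocks, and by tracking the telescoping chains $W_{i,\{k\}}\bar\oplus W_{i,\{k+1\}}$ one identifies the omitted blocks as those that would pair two of the $K-N$ ``non-weakest'' users, whose aggregate length evaluates to $[\sum_{k=1}^{K-N}(k-1)c_{k+N}](\prod_l b_l)F=\Delta R_2 F$. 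Adding the three parts yields the \textsc{Coded Delivery} rate $\sum_{k=1}^{K}\prod_{j=1}^{k}b_j-\Delta R_1-\Delta R_2$, the first term in the minimum, and since the server picks the cheaper procedure, \eqref{OurDeliveryRateHeterogenous} follows.

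The step I expect to be the main obstacle is the Part~2 accounting in the last paragraph: identifying exactly which coded segments the chain construction $X(2,1),X(2,2)$ omits relative to \cite{WangHeterogenous}, and verifying that their combined length collapses to $\Delta R_2$. This needs careful tracking, within and across groups, of the blocks $W_{i,\{k\}}\bar\oplus W_{i,\{k+1\}}$ and of the single cross-group block $W_{i,\{S_{j-1}+1\}}\bar\oplus W_{j,\{S_{i-1}+1\}}$ per pair $(i,j)$, evaluating each $\max$ of argument lengths through the within-group cache ordering. A secondary difficulty is making the worst-case exchange argument of the first paragraph fully rigorous for \textsc{Coded Delivery}, since relocating a user between groups perturbs many segment lengths simultaneously, so it should be organized as a sum of non-negative increments over the affected transmissions.
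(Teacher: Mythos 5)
Your proposal is correct and follows essentially the same route as the paper's proof: benchmark the CODED DELIVERY procedure of Algorithm~\ref{DeliveryHeterogenous} against the scheme of \cite{WangHeterogenous}, whose total rate is $\sum_{k=1}^{K}\prod_{j=1}^{k}\left(1-M_j/N\right)$, quantify the Part-1 and Part-2 savings as $\Delta R_1$ and $\Delta R_2$, compute the RANDOM DELIVERY rate $\sum_{i=1}^{N}\left(1-M_i/N\right)$, and take the minimum, with the worst case being the $N$ smallest-cache users falling in distinct groups. The two steps you flag as obstacles are resolved in the paper by expressing the general Part-2 rate as a linear combination of the quantities $Q_k=\frac{M_k}{N-M_k}\prod_{l=1}^{K}\left(1-\frac{M_l}{N}\right)$, so that the worst case follows from comparing coefficients (coefficient $k-1$ for the $k$-th group-minimum user versus $N$ for every other user) rather than from an exchange argument, and your aggregate-length identification of $\Delta R_2$ agrees with the paper's explicit subtraction $R_{b_2}-R_{\mathrm{GBD}_2}$.
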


\underline{\textbf{Comparison with the state-of-the-art:}} Here the proposed GBD scheme for distinct cache capacities is compared with the scheme proposed in \cite{WangHeterogenous}. We note that, although the scheme presented in \cite{WangHeterogenous} is for $N \ge K$, it can also be applied to the case $N < K$, and the delivery rate given in \cite[Theorem 2]{WangHeterogenous}, denoted here by $R_b(M_{[1:K]})$, can still be achieved. Hence, in the following, when we refer to the scheme in \cite[Algorithm 2]{WangHeterogenous}, we consider its generalization to all $N$ and $K$ values. When $N<K$, according to \cite[Theorem 2]{WangHeterogenous} and \eqref{OurDeliveryRateHeterogenous}, we have 
\begin{align}\label{DeliveryRateHeterogenousComparison} {R_b}\left( {M_{[1:K]}}  \right) - & {R_{\rm{GBD}}}\left( {M_{[1:K]}}  \right) \ge\nonumber\\
& \Delta {R_1}\left( {M_{[1:K]}}  \right) + \Delta {R_2}\left( {M_{[1:K]}}  \right) > 0.
\end{align}
The second inequality in (\ref{DeliveryRateHeterogenousComparison}) holds as long as $N < K$. Therefore, when the number of files in the database is smaller than the number of active users in the delivery phase, the GBD scheme achieves a strictly smaller delivery rate than the one presented in \cite{WangHeterogenous}.

\begin{figure}[!t]
\centering
\includegraphics[scale=0.45,trim={0pt 12pt 0pt 35pt},clip]{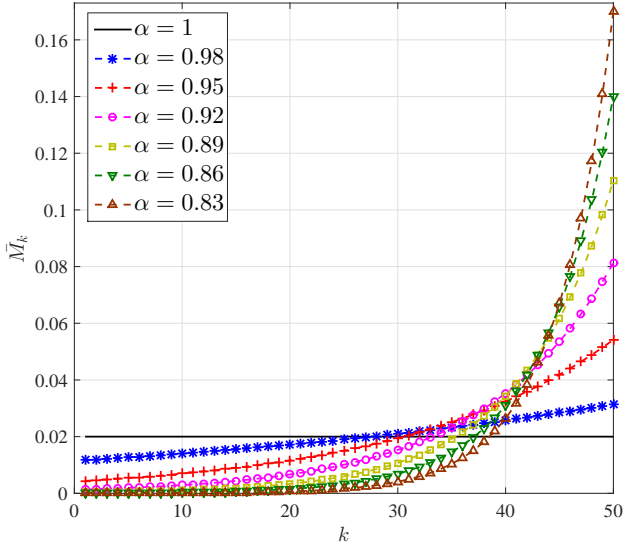}
\caption{Illustration of the normalized cache capacity distribution (normalized by $\sum\limits_{k = 1}^K {{M_k}}$) for different $\alpha$ values, in a cache network of $K=50$ users. The $x$-axis corresponds to the user index $k$.} 
\label{CacheCapacitiesDistribution}
\end{figure}

\begin{remark}\label{RemarkComparisonNonUniform}
We note that the scheme of \cite{WangHeterogenous} exploits the caching scheme of \cite{MaddahAliDecentralized} when the user cache capacities are distinct, and for any demand combination $d_{[1:K]}$, it delivers ${\overline  \bigoplus  } _{v \in \mathcal V} W_{d_{v},\left\{ \mathcal V \right\}\backslash \left\{v\right\}}$ to the users in any non-empty subset of users $\mathcal V \subset [1:K]$, regardless of the users with the same demand. Thus, for the same reason explained in Remark \ref{RemarkComparisonUniform} for uniform cache capacities, the proposed scheme in this paper outperforms the one in \cite{WangHeterogenous} for distinct cache capacities.  
\end{remark}

\subsection{Lower Bound on the Delivery Rate}\label{LowerBoundDistinctScenario}

In the next theorem, we generalize the information theoretic lower bound proposed in \cite{SenguptaCaching} to the content delivery network with distinct cache capacities. This lower bound is tighter than the classical cut-set bound.

\begin{theorem}\label{LowerBound}
In a content delivery network with $N$ files in the database, serving $K$ users with distinct cache capacities, $M_{[1:K]}$ assorted in an ascending order, the optimal delivery rate satisfies
\begin{align}\label{LowerBoundTheorem} 
& {R^*}\left(M_{[1:K]}  \right) \ge {R_{\rm{LB}}}\left( M_{[1:K]} \right) = \mathop {\max }\limits_{\scriptstyle\;\;s \in \left[ {1:K} \right],\hfill\atop
\scriptstyle l \in \left[ {1:\left\lceil {N/s} \right\rceil } \right]\hfill} \frac{1}{l}\times\nonumber\\
&\left\{ N - \frac{s}{{s + \gamma }}\sum\limits_{i = 1}^{s + \gamma} M_i - \frac{{\gamma {{\left( {N - ls} \right)}^+ }}}{{s + \gamma }} - {\left( {N - Kl} \right)}^+  \right\},
\end{align}
where $\gamma \buildrel \Delta \over = \min \left\{ {{{\left( {\left\lfloor {N/l} \right\rfloor  - s} \right)}^ + },K - s} \right\}$, $\forall s, l$. 
\end{theorem}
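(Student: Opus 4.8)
The plan is to follow the information-theoretic cut-set-style argument of \cite{SenguptaCaching}, carefully adapted to account for the distinct cache capacities $M_{[1:K]}$. The core idea is to fix two parameters $s \in [1:K]$ and $l \in [1:\lceil N/s \rceil]$, and to consider a sequence of $l$ successive transmissions $X_1, \ldots, X_l$, where $X_t$ serves a demand vector that requests $s$ fresh files (files not requested in any earlier round), so that across the $l$ rounds a total of $\min\{ls, N\} = ls$ distinct files are recovered (we work in the regime $ls \le N$, consistent with $l \le \lceil N/s \rceil$). The decoders we use are the $s$ users $U_1, \ldots, U_s$ with the \emph{smallest} cache capacities in round $1$; together with a further $\gamma$ users of next-smallest capacity, they will be used to soak up the remaining entropy. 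The inequality chain will start from $H(W_{\mathcal{F}})$ for a suitable set $\mathcal{F}$ of $N$ files (or $ls$ of them, plus the leftover $(N-ls)^+$ handled separately), and successively peel off: (i) the delivery signals $X_1, \ldots, X_l$, contributing $lR F$; (ii) the cache contents $Z_1, \ldots, Z_{s+\gamma}$, contributing $\sum_{i=1}^{s+\gamma} M_i F$; and (iii) a correction term because only $\frac{s}{s+\gamma}$ of those caches can be "charged" per file, plus the $\gamma(N-ls)^+/(s+\gamma)$ and $(N-Kl)^+$ terms that come from files never delivered and from the limited number of users.

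**Key steps, in order.** First I would set up the notation for the $l$ rounds and the index sets, and invoke the decoding requirement (Definition 1 / Definition 2) together with Fano's inequality to assert that from $(X_t, Z_k)$ user $U_k$ reconstructs its $l$-round file list with vanishing error, hence $H(W_{\text{requested in rounds }1..l} \mid X_{[1:l]}, Z_k) \le l F \epsilon_F$ with $\epsilon_F \to 0$. Second, I would write the key entropy bound: group the $s+\gamma$ smallest-cache users, and observe that by a symmetrization/averaging argument over which $s$ of the $s+\gamma$ users are the "active decoders" in a given round, the total cache entropy that can be subtracted is scaled by $\frac{s}{s+\gamma}$ — this is exactly the mechanism by which \cite{SenguptaCaching} beats the cut-set bound, and it is where the ascending ordering $M_1 \le \cdots \le M_K$ is used to make $\sum_{i=1}^{s+\gamma} M_i$ the relevant (smallest) sum. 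Third, I would account for the files that are never requested across the $l$ rounds: if $Kl < N$ there are at least $(N-Kl)^+$ such files whose entropy $F$ each must appear on the left but cannot be cancelled, giving the $-(N-Kl)^+$ term; similarly the $ls < N$ bookkeeping produces the $\gamma(N-ls)^+/(s+\gamma)$ term. Fourth, I would collect terms, divide by $l$, let $F \to \infty$ so the Fano terms vanish, and finally maximize over $s$ and $l$ to obtain \eqref{LowerBoundTheorem}, with $\gamma$ chosen as the stated minimum to keep the number of extra users within $[1:K-s]$ and within $\lfloor N/l\rfloor - s$ (so that there are enough distinct files to justify the averaging).

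**Main obstacle.** The delicate point is the combinatorial averaging step that yields the coefficient $\frac{s}{s+\gamma}$ in front of $\sum_{i=1}^{s+\gamma} M_i$. In the homogeneous case of \cite{SenguptaCaching} all caches have the same size, so one can symmetrize freely over user identities; here the caches are distinct, so one must argue that taking the $s+\gamma$ \emph{smallest} caches and then averaging the roles of "which $s$ are decoders" still gives a valid lower bound — i.e. that replacing an unequal collection by $\frac{s}{s+\gamma}\sum_{i=1}^{s+\gamma} M_i$ does not overcount. I expect this to require either a careful choice of which demand vectors to assign to which rounds (so that each of the $s+\gamma$ users plays the decoder role the same number of times across a cyclic family of $l$-round blocks), or a convexity argument bounding the worst assignment. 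The remaining pieces — Fano, chain rule, the $(\cdot)^+$ bookkeeping, and the final maximization — are routine once this averaging is pinned down, and I would relegate the full details to the appendix as the paper indicates.
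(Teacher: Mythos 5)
Your outline follows the same route as the paper's proof: an adaptation of the bound in \cite{SenguptaCaching}, with $l$ rounds of demands covering $s$ fresh files each, Fano's inequality, a charge of $lFR^*$ for the delivery signals, the $s+\gamma$ smallest caches as side information, the residual terms $\gamma(N-ls)^+/(s+\gamma)$ and $(N-Kl)^+$, and a final maximization over $s$ and $l$. The problem is that the one step you yourself flag as the ``main obstacle''---justifying the coefficient $\frac{s}{s+\gamma}$ in front of $\sum_{i=1}^{s+\gamma}M_i$ with unequal caches---is left unresolved, and the two fixes you float (a cyclic schedule making each user a decoder equally often, or an unspecified convexity argument over assignments) are not developed and are not what is needed. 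As it stands, the sketch does not contain the mechanism that actually produces the claimed bound.

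The paper closes this gap as follows. For the fixed conditioning set $Z_{[1:s+\gamma]}$, it derives one inequality of the form $NF \le \sum_{i\in\mathcal J}M_iF - H(Z_{\mathcal J}\mid W_{[1:ls]}) + lFR^*\left(M_{[1:K]}\right) + H(Z_{[1:s+\gamma]}\mid W_{[1:ls]}) + (N-Kl)^+F + \varepsilon l s F + 1$ for \emph{every} subset $\mathcal J\subset[1:s+\gamma]$ with $|\mathcal J|=s$, simply by letting the users in $\mathcal J$ be the ones whose demands sweep through the $ls$ fresh files (the demand vectors and hence the messages $X_{[1:l]}$ change with $\mathcal J$, but the bounds $H(Z_{\mathcal J})\le\sum_{i\in\mathcal J}M_iF$, $H(X_{[1:l]}\mid Z_{\mathcal J})\le lFR^*$, Fano, and the terminal bound $H(X_{[l+1:\lceil N/s\rceil]}\mid Z_{[1:s+\gamma]},X_{[1:l]},W_{[1:ls]})\le(N-Kl)^+F$ from \cite{SenguptaCaching} are unaffected). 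Averaging these $\binom{s+\gamma}{s}$ valid inequalities gives exactly $\frac{s}{s+\gamma}\sum_{i=1}^{s+\gamma}M_i$---there is no overcounting concern, since one averages inequalities rather than redistributing memory---and the averaged cache-entropy term is handled by Han's inequality \cite[Theorem 17.6.1]{CoverInformation}, which gives $\frac{1}{\binom{s+\gamma}{s}}\sum_{\mathcal J}H(Z_{\mathcal J}\mid W_{[1:ls]})\ge\frac{s}{s+\gamma}H(Z_{[1:s+\gamma]}\mid W_{[1:ls]})$; the leftover $\frac{\gamma}{s+\gamma}H(Z_{[1:s+\gamma]}\mid W_{[1:ls]})$ is then bounded by $\frac{\gamma}{s+\gamma}(N-ls)^+F$, again via \cite{SenguptaCaching}. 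Note that heterogeneity plays no role in these entropic steps; it enters only at the very end, where, since the argument is valid for any choice of $s+\gamma$ users, selecting the $s+\gamma$ users with the smallest capacities (the ascending ordering) yields the largest, hence the stated, lower bound. Your skeleton, terms, and choice of $\gamma$ are all consistent with this, but without the per-subset inequalities and the Han's-inequality step the proof is incomplete.
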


\begin{proof}
     The proof of the theorem can be found in Appendix \ref{ProofLowerBound}.
\end{proof}

\section{Numerical Results}\label{s:Comparison}

In this section, the proposed GBD scheme for distinct cache capacities is compared with the scheme proposed in \cite{WangHeterogenous} numerically. To highlight the gains from the proposed scheme, we also evaluate the performance of uncoded caching, in which $U_k$, $k \in \left[ 1:K \right]$, caches the first $M_k/N$ bits of each file during the placement phase; and in the delivery phase the remaining $1-M_k/N$ bits of file $W_{d_k}$ requested by $U_k$ are delivered. By a simple analysis, it can be verified that the worst-case delivery rate is given by
\begin{equation}\label{DeliveryRateUncodedCachingScheme}
{R_{uc}}\left( M_{[1:K]} \right) = \sum\limits_{i = 1}^{\min \left\{ {N,K} \right\}} {\left( {1 - \frac{{{M_i}}}{N}} \right)}, 
\end{equation}   
which is equal to the delivery rate of the RANDOM DELIVERY procedure in Algorithm \ref{DeliveryHeterogenous}.

For the numerical results, we consider an exponential cache capacity distribution among users, such that the cache capacity of $U_k$ is given by $M_k = {\alpha ^{K - k}}M_{\rm{max}}$, where $0 \le \alpha \le 1$, for $k=1, \ldots, K$, and $M_{\rm{max}}$ denotes the maximum cache capacity in the system. Thus, we have $M_{[1:K]} = \left( {{\alpha ^{K - 1}}M_{\rm{max}},{\alpha ^{K - 2}}M_{\rm{max}}, \ldots, M_{\rm{max}}} \right)$, which results in $M_1 \le M_2 \le \cdots \le M_K$, and the total cache capacity across the network is given by
\begin{equation}\label{TotalCacheCapacity}
\sum\limits_{k = 1}^K {{M_k}}  = M_{\rm{max}}{\alpha ^k}\frac{{1 - \alpha }}{{1 - {\alpha ^{K + 1}}}}.
\end{equation}
The distribution of the cache capacities normalized by the total cache capacity available in the network, denoted by $\bar{M_k} \buildrel \Delta \over = M_k/\sum\limits_{k = 1}^K {{M_k}}$, $\forall k \in \left[1:K \right]$, is demonstrated in Fig. \ref{CacheCapacitiesDistribution} for different values of $\alpha$, when $K=50$. Observe that, the smaller the value of $\alpha$, the more skewed the cache capacity distribution across the users becomes. In the special case of $\alpha=1$, we obtain the homogeneous cache capacity model studied in \cite{MaddahAliDecentralized}.

\begin{figure}[!t]
\centering
\includegraphics[scale=0.378,trim={0pt 12pt 0pt 36pt},clip]{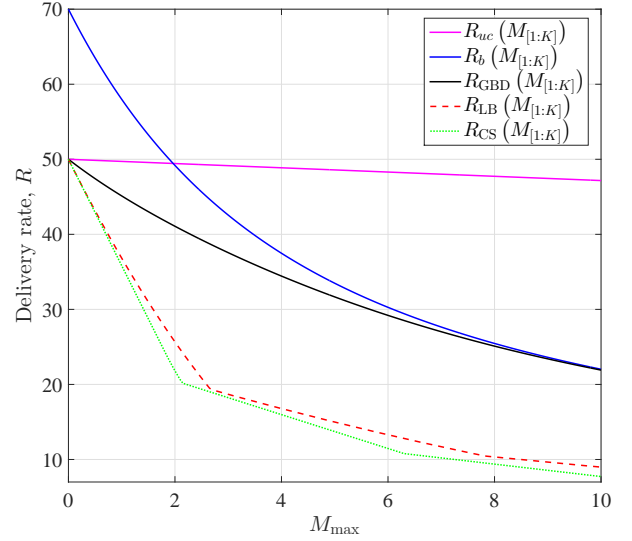}
\caption{Delivery rate versus $M_{\rm{max}}$, where the cache capacity of user $k$ is $M_k = {\alpha ^{K - k}}M_{\rm{max}}$, $k=1, \ldots, K$, with $\alpha=0.97$, $N=50$, and $K=70$.} 
\label{N50K70}
\end{figure}

In Fig. \ref{N3K3}, the delivery rate of the proposed scheme is compared with the scheme in \cite{WangHeterogenous} and uncoded caching, as well as the derived lower bound and the classical cut-set bound, when $N=K=3$ and $\alpha=0.8$. The delivery rate is plotted with respect to the largest cache capacity in the system, $M_{\rm{max}}$. As expected, the delivery rate reduces as $M_{\rm{max}}$ increases. This figure validates that the scheme proposed in Algorithm \ref{DeliveryHeterogenous} achieves the same delivery rate as in \cite{WangHeterogenous} for $N \ge K$. The GBD scheme achieves a significantly lower delivery rate compared to the uncoded scheme. It is to be noted that, the cut-set based lower bound derived in \cite{WangHeterogenous} is for the case $N \ge K$; while, by ordering the users such that $M_1 \le M_2 \le \cdots \le M_K$, it can be re-written as follows for the general case:
\begin{equation}\label{CutSetLowerBound}
{R_{\rm{CS}}}\left( M_{[1:K]}  \right) = \mathop {\max }\limits_{s \in \left[ {1:\min \left\{ {N,K} \right\}} \right]} \left\{ {s - \frac{{\sum\limits_{i = 1}^s {{M_i}} }}{{\left\lfloor {N/s} \right\rfloor }}} \right\}.
\end{equation}
The proposed lower bound is also plotted in Fig. \ref{N3K3}. Similar to the case with identical cache sizes, the proposed lower bound is tighter than the cut-set lower bound for medium cache capacities. However, there remains a gap between this improved lower bound and the achievable delivery rate for the whole range of $M_{\rm{max}}$ values.

\begin{figure}[!t]
\centering
\includegraphics[scale=0.387,trim={0pt 2pt 0pt 45pt},clip]{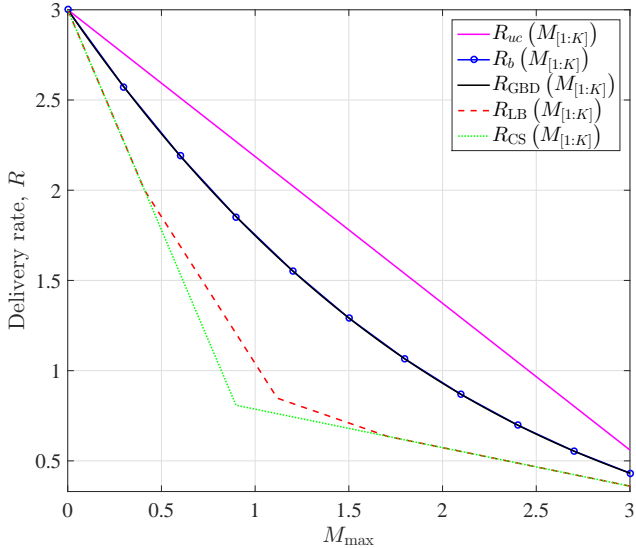}
\caption{Delivery rate versus $M_{\rm{max}}$, where the cache capacity of user $k$ is $M_k = {\alpha ^{K - k}}M_{\rm{max}}$, $k=1, \ldots, K$, when $N=K=3$ and $\alpha=0.8$.} 
\label{N3K3}
\end{figure}

In Fig. \ref{N50K70}, the delivery rate $R_{\rm{GBD}}(M_{[1:K]})$ is compared with $R_b(M_{[1:K]})$ and the uncoded scheme, $R_{uc}(M_{[1:K]})$, when $N=50$, $K=70$, and $\alpha = 0.97$. We clearly observe that the proposed scheme outperforms both schemes at all values of $M_{\rm{max}}$. The improvement is particularly significant for lower values of $M_{\rm{max}}$, and it diminishes as $M_{\rm{max}}$ increases. The proposed and the cut-set lower bounds are also included in the figure. Although the delivery rate of the proposed scheme meets the lower bounds when $M_{\rm{max}} = 0$, the gap in between quickly expands with $M_{\rm{max}}$.

In order to observe the effect of the skewness of the cache capacity distribution across users on the delivery rate, in Fig. \ref{N_75_K_90_AlphaVary}, the delivery rate is plotted as a function of $\alpha \in \left[ 0.9, 1 \right]$, for $N=30$, $K=45$, and the largest cache capacity of $M_{\rm{max}}=2$. Again, the GBD scheme achieves a lower delivery rate for the whole range of $\alpha$ values under consideration. As opposed to uncoded caching, the gain over the scheme studied in \cite{WangHeterogenous} is more pronounced for smaller values of $\alpha$, i.e., as the distribution of cache capacities becomes more skewed. We also observe that the gap to the lower bound is also smaller in this regime.

In Fig. \ref{N60KVariable}, the delivery rate is plotted with respect to the number of users, $K \in \left[ 1:100 \right]$, for $N=60$, $M_{\rm{max}}=5$, and $\alpha=0.96$. Observe that the improvement of the GBD scheme is more significant when the number of users requesting content in the delivery phase increases, whereas the gap between the GBD scheme and uncoded caching diminishes as $K$ increases.

\begin{figure}[!t]
\centering
\includegraphics[scale=0.378,trim={0pt 12pt 0pt 36pt},clip]{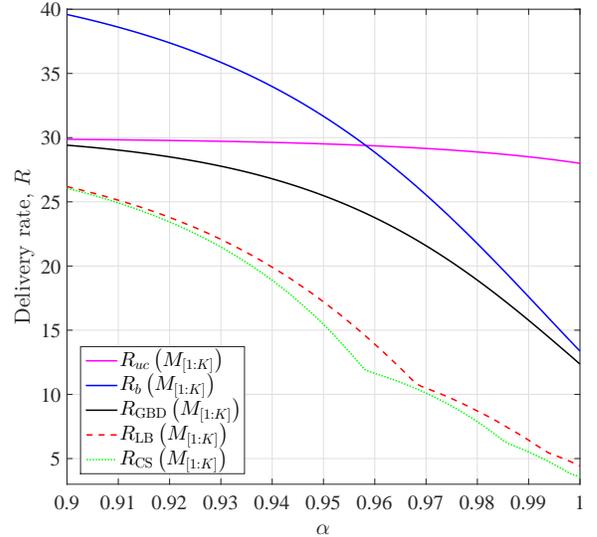}
\caption{Delivery rate versus $\alpha \in \left[ 0.9,1 \right]$, where $M_k = {\alpha ^{K - k}}M_{\rm{max}}$, for $k=1, ..., K$, and $N=30$, $K=45$, and $M_{\rm{max}}=2$.} 
\label{N_75_K_90_AlphaVary}
\end{figure}

In Fig. \ref{NVariableK50}, the delivery rate is plotted with respect to the number of files, $N \in [10:K-1]$, where the other parameters are fixed as $K=40$, $M_{\rm{max}}=4$, and $\alpha = 0.94$. We observe that, the GBD scheme requires a smaller delivery rate compared to the state-of-the-art over the whole range of $N$ values; while the improvement is more pronounced for smaller values of $N$. Observe also that, for relatively small values of $N$, the RANDOM DELIVERY procedure presented in Algorithm \ref{DeliveryHeterogenous}, which has the same performance as uncoded caching, outperforms the CODED DELIVERY procedure, i.e., $R_{\rm{RD}}\left( M_{[1:K]}  \right) < R_{\rm{CD}}\left( M_{[1:K]}  \right)$. The performance of uncoded caching gets worse with increasing $N$ in this setting.

\section{Conclusions}\label{Conc}
We have studied proactive content caching at user terminals with distinct cache capacities, and proposed a novel caching scheme in a decentralized setting that improves upon the best known delivery rate in the literature. The improvement is achieved by creating more multicasting opportunities for the delivery of bits that have not been cached by any of the users, or cached by only a single user. In particular, the proposed scheme exploits the group-based coded caching scheme we have introduced previously for centralized content caching in a system with homogeneous cache capacities in \cite{MohammadQianDenizITW}. Our numerical results show that the improvement upon the scheme proposed in \cite{WangHeterogenous} becomes more pronounced as the cache capacity distribution across users becomes more skewed, showing that the proposed scheme is more robust against variations across user capabilities. We have also derived a lower bound on the delivery rate, which has been shown numerically to be tighter than the cut-set based lower bound studied in \cite{WangHeterogenous}. The gap between the lower bound and the best achievable delivery rate remains significant, calling for more research to tighten the gap in both directions.

\begin{figure}[!t]
\centering
\includegraphics[scale=0.37,trim={0pt 12pt 0pt 36pt},clip]{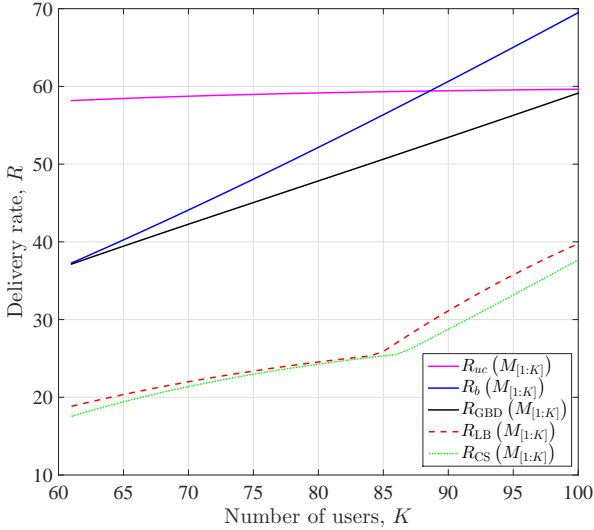}
\caption{Delivery rate versus the number of users $K \in \left[ 1:100 \right]$, where $M_k = {\alpha ^{K - k}}M_{\rm{max}}$, for $k=1, ..., K$, with $N=60$, $M_{\rm{max}}=5$, and $\alpha=0.96$.} 
\label{N60KVariable}
\end{figure}

\appendices

\section{Delivery Rate in Part 2 of the GBD Scheme}\label{ProofRUc}
If $N' \le \min \{ N,K \}$ distinct files are requested by the users, without loss of generality, we order the users so that the users in $\mathcal G_i$ request $W_i$, for $i=1, ..., N'$, i.e., $K_i=0$, for $i=N'+1, ..., N$. The coded contents delivered in line 7 of Algorithm \ref{DeliveryDecentralizedUniform} enable each user to obtain the subfiles of its requested file which are in the cache of one of the other users in the same group. Consider, for example, the first group, i.e., $i=1$ in line 7 of Algorithm \ref{DeliveryDecentralizedUniform}, which refers to the users that demand $W_1$. The XOR-ed contents $W_{1,\{k\}} \oplus W_{1,\{k+1\}}$, for $k \in \left[ {1:{K_1} - 1} \right]$, are delivered by the server. Having subfile $W_{1,\{k\}}$ cached, user $U_k$, for $k \in \left[ {1:{K_1}} \right]$, can decode all the remaining subfiles $W_{1,\{j\}}$, for $j \in \left[ {1:{K_1}} \right]\backslash \left\{ {k} \right\}$. A total of $(K_{1} - 1)$ XOR-ed contents, each of size $(M/N)(1-M/N)^{K-1}F$ bits, are delivered for the users in $\mathcal G_1$. Similarly, for the second group ($i=2$ in line 7 of Algorithm \ref{DeliveryDecentralizedUniform}), consisting of the users requesting $W_2$, the XOR-ed contents $W_{2,\{k\}} \oplus W_{2,\{k+1\}}$, for $k \in \left[ {K_{1} + 1:{K_1 + K_{2}} - 1} \right]$, are sent. With subfile $W_{2,\{k\}}$ available locally at $U_k$, for $k \in \left[ {K_{1} + 1:{K_1 + K_{2}}} \right]$, $U_k$ can obtain the missing subfiles $W_{2,\{j\}}$, for $j \in \left[ {K_{1} + 1:{K_1 + K_{2}}} \right] \backslash \left\{ {k} \right\}$. Hence, a total of $(K_2 - 1)(M/N)(1-M/N)^{K-1}F$ bits are served for the users in $\mathcal G_2$, and so on so forth. Accordingly, $(K_i - 1)(M/N)(1-M/N)^{K-1}F$ bits are delivered for group $\mathcal G_i$, $i=1, ..., N'$, and the total number of bits sent by the server in the second part of the CODED DELIVERY procedure in Algorithm \ref{DeliveryDecentralizedUniform} is 
\begin{align}\label{DeliveryRateFirstProcedure} &\left(\frac{M}{N}\right)\left(1-\frac{M}{N}\right)^{K-1}F\sum\limits_{i = 1}^{N'} {\left( {{K_i} - 1} \right)}=  \nonumber\\
& \qquad \qquad \left( K - N' \right)\left(\frac{M}{N}\right)\left(1-\frac{M}{N}\right)^{K-1}F.   
\end{align}

\begin{figure}[!t]
\centering
\includegraphics[scale=0.37,trim={0pt 12pt 0pt 36pt},clip]{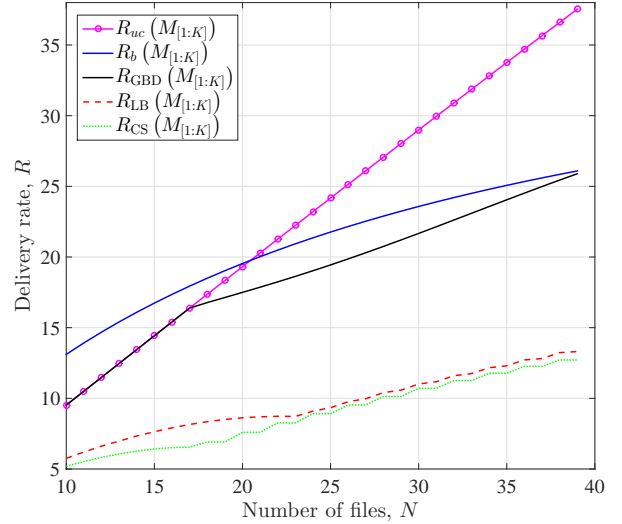}
\caption{Delivery rate versus $N \in \left[ 10:K-1 \right]$, where $M_k = {\alpha ^{K - k}}M_{\rm{max}}$, for $k=1, ..., K$, with $K=40$, $M_{\rm{max}}=4$, and $\alpha=0.94$.} 
\label{NVariableK50}
\end{figure}

After receiving the coded contents delivered in line 8 of Algorithm \ref{DeliveryDecentralizedUniform}, each user in $\mathcal G_i$, for $i \in \left[ 1:N' \right]$, can recover the missing subfiles of its request $W_i$, which are in the cache of one of the users in groups $j \in \left[ 1:N' \right] \backslash \left\{ {i} \right\}$. Consider, for example, $i=1$ and $j=2$. The XOR-ed contents $W_{1,\{k\}} \oplus W_{1,\{k+1\}}$, for $k \in \left[ {{K_1} + 1:{K_1} + {K_2} - 1} \right]$, i.e., the subfiles of $W_1$ cached by users in $\mathcal G_2$, are delivered. Next, the XOR-ed contents $W_{2,\{k\}} \oplus W_{2,\{k+1\}}$, for $k \in \left[ 1:{K_1} - 1 \right]$, i.e., the subfiles of $W_2$ cached by users in $\mathcal G_1$, are delivered. Finally, by delivering $W_{1,\{K_1 + K_2\}} \oplus W_{2,\{K_1\}}$, and having already decoded $W_{2,\{k\}}$ ($W_{1,\{k\}}$), $U_k$ in $\mathcal G_1$ ($\mathcal G_2$) can recover the missing subfiles of its request $W_1$ ($W_2$) which are in the cache of users in $\mathcal G_2$ ($\mathcal G_1$), for $k \in \left[ 1:{K_1} \right]$ (for $k \in \left[ {{K_1} + 1:{K_1} + {K_2}} \right]$). The number of coded contents delivered in line 8 is $(K_2 - 1)+ (K_1 - 1) +1$, each of length $(M/N)(1-M/N)^{K-1}F$ bits, which adds up to a total number of $(K_1 + K_2 - 1)(M/N)(1-M/N)^{K-1}F$ bits. In a similar manner, the subfiles can be exchanged between users in groups $\mathcal G_i$ and $\mathcal G_j$, for $i \in \left[ 1:N'-1 \right]$ and $j \in \left[ i + 1:N' \right]$, by delivering a total of $(K_i + K_j - 1)(M/N)(1-M/N)^{K-1}F$ bits through sending the XOR-ed contents stated in line 8 of Algorithm \ref{DeliveryDecentralizedUniform}. Hence, the total number of bits delivered in the second part of CODED DELIVERY procedure is given by   
\begin{align}\label{DeliveryRateSecondProcedure} 
& \left(\frac{M}{N}\right)\left(1-\frac{M}{N}\right)^{K-1}F\sum\limits_{i = 1}^{N' - 1} {\sum\limits_{j = i + 1}^{N'} {\left( {{K_i} + {K_j} - 1} \right)} }  =\nonumber\\
& \qquad \left( {N' - 1} \right)\left( {K - \frac{N'}{2}} \right)\left(\frac{M}{N}\right)\left(1-\frac{M}{N}\right)^{K-1}F.   
\end{align}
By summing up \eqref{DeliveryRateFirstProcedure} and \eqref{DeliveryRateSecondProcedure}, the normalized number of bits delivered in the second part of the CODED DELIVERY procedure in Algorithm \ref{DeliveryDecentralizedUniform} is given by
\begin{equation}\label{DeliveryRateEndProcedure} 
R_{\rm{GBD}}^{\rm{U}} = N' \left( K - \frac{N'+1}{2} \right)\left(\frac{M}{N}\right)\left(1-\frac{M}{N}\right)^{K-1}.
\end{equation}

\section{Proof of Theorem \ref{TheDelRateDecDistCacheSizes}}\label{Proof}
Consider first the CODED DELIVERY procedure in Algorithm \ref{DeliveryHeterogenous}. We note that, when $N<K$, the difference between the first procedure of the proposed delivery phase and the delivery phase presented in \cite[Algorithm 1]{WangHeterogenous} lies in the first two parts, i.e., delivering the missing bits of the requested files, which either have not been cached by any user, or have been cached by only a single user. Hence, having the delivery rate of the scheme in \cite[Algorithm 1]{WangHeterogenous}, the delivery rate of the CODED DELIVERY procedure in Algorithm \ref{DeliveryHeterogenous} can be determined by finding the difference in the delivery rates in these first two parts.

The delivery rate for Part 1 of the proposed CODED DELIVERY procedure, in which the bits of each request $W_{d_k}$, for $k \in [1:K]$, that have not been cached by any user are directly sent to the users requesting the file, is given by
\begin{equation}\label{OurDeliveryRateHeterogenousPartOne} {R_{{{\rm{GBD}}_1}}}\left( M_{[1:K]}  \right) = N\prod\limits_{k = 1}^K {\left( {1 - \frac{{{M_k}}}{N}} \right)}.
\end{equation}
We can see that the worst-case demand combination for this part of the CODED DELIVERY procedure is when each file is requested by at least one user, i.e., $K_i \ge 1$, $\forall i \in \left[ 1:N \right]$.

The corresponding delivery rate of \cite[Algorithm 1]{WangHeterogenous} is given by:
\begin{equation}\label{TheirDeliveryRateHeterogenousPartOne} {R_{{b_1}}}\left( M_{[1:K]}  \right) = K\prod\limits_{k = 1}^K {\left( {1 - \frac{{{M_k}}}{N}} \right)}.
\end{equation}
The difference between these two delivery rates is
\begin{align}\label{DiffDeliveryRateHeterogenousPartOne} \Delta {R_1} \left( M_{[1:K]}  \right)& \buildrel \Delta \over = {R_{{b_1}}}\left( M_{[1:K]}  \right) - {R_{{{\rm{GBD}}_1}}}\left( M_{[1:K]}  \right) \nonumber\\
& = \left( {K - N} \right)\prod\limits_{k = 1}^K {\left( {1 - \frac{{{M_k}}}{N}} \right)}.
\end{align}

In Part 2 of the delivery phase of the GBD scheme, we deal with the bits of each requested file that have been cached by only a single user $U_k$, i.e., $W_{d_j,\left\{k \right\}}$, for some $k,j \in \left[ 1:K \right]$. For any request $W_{d_j}$, the normalized number of bits that have been cached exclusively by $U_k$ will be denoted by $Q_k$. As $F \to \infty$, by the law of large numbers, $Q_k$ can be approximated as \cite{MaddahAliDecentralized}    
\begin{align}\label{DefinitionQjDeliveryRateHeterogenousPartTwo} Q_k & \approx \left( {\frac{{{M_k}}}{N}} \right)\prod\limits_{l \in \left[ {1:K} \right]\backslash \left\{ k \right\}} {\left( {1 - \frac{{{M_l}}}{N}} \right)} \nonumber\\
& = \left( {\frac{{{M_k}}}{{N - {M_k}}}} \right)\prod\limits_{l = 1}^K {\left( {1 - \frac{{{M_l}}}{N}} \right)}.
\end{align}
From \eqref{DefinitionQjDeliveryRateHeterogenousPartTwo} we can see that $Q_i \ge Q_j$, $i \ne j$, $\forall i,j \in \left[ 1:K \right]$, if and only if $M_i \ge M_j$; that is, the user with a larger cache size stores more bits of each file for $F$ sufficiently large.

Next, we evaluate the delivery rate for Part 2 of the CODED DELIVERY procedure. We start with message $X(2,1)$. For the users in $\mathcal{G}_i$, for $i=1, ..., N$, ordered in increasing cache capacities ${M_{{S_{i - 1}} + 1}} \le {M_{{S_{i - 1}} + 2}} \le \cdots \le {M_{{S_i}}}$, a total number of $\left( K_i -1 \right)$ pieces, with the normalized sizes $Q_{[S_{i-1}+2:S_{i}]}$, are delivered. Thus, the delivery rate of the common message $X(2,1)$ is given by
\begin{equation}\label{DeliveryRateHeterogenousPartTwoOne} R_{{{\rm{GBD}}_2}}^1\left( M_{[1:K]}  \right) \buildrel \Delta \over = \sum\limits_{i = 1}^N {\sum\limits_{k = {S_{i - 1}} + 2}^{{S_i}} {{Q_k}} }.
\end{equation}
In line 8 of Algorithm \ref{DeliveryHeterogenous}, $\left( K_j-1 \right)$ pieces, each of length $Q_{\left[ S_{j-1}+2:S_{j} \right]}$, and $\left( K_i-1 \right)$ pieces, each of length $Q_{\left[ S_{i-1}+2:S_{i} \right]}$ are delivered for users in $\mathcal{G}_i$ and $\mathcal{G}_j$, respectively, for $i=1, ..., N-1$ and $j=i+1, ..., N$. Hence, the rate of the common message $X(2,2)$ is given by
\begin{equation}\label{DeliveryRateHeterogenousPartTwoTwo} R_{{{\rm{GBD}}_2}}^2\left( M_{[1:K]}  \right) \buildrel \Delta \over = \sum\limits_{i = 1}^{N - 1} {\sum\limits_{j = i + 1}^N {\left( {\sum\limits_{k = {S_{j - 1}} + 2}^{{S_j}} {{Q_k}}  + \sum\limits_{k = {S_{i - 1}} + 2}^{{S_i}} {{Q_k}} } \right)} } .
\end{equation}
For each $i \in [1:N-1]$ and $j \in [i+1:N]$, the normalized length of the bits delivered with the common message $X(2,3)$ is $\max \left\{ {{Q_{{S_{j - 1}} + 1}},{Q_{{S_{i - 1}} + 1}}} \right\}$. Thus, the rate of $X(2,3)$ is found to be:
\begin{equation}\label{DeliveryRateHeterogenousPartTwoThree} R_{{{\rm{GBD}}_2}}^3\left( M_{[1:K]} \right) \buildrel \Delta \over = \sum\limits_{i = 1}^{N - 1} {\sum\limits_{j = i + 1}^N {\max \left\{ {{Q_{{S_{j - 1}} + 1}},{Q_{{S_{i - 1}} + 1}}} \right\}} }.
\end{equation}
To simplify the presentation, without loss of generality, let us assume that ${M_1} \le {M_{{S_1} + 1}} \le \cdots \le {M_{{S_{N - 1}} + 1}}$. Then \eqref{DeliveryRateHeterogenousPartTwoThree} can be rewritten as
\begin{equation}\label{DeliveryRateHeterogenousPartTwoThreeSim} R_{{{\rm{GBD}}_2}}^3\left( M_{[1:K]} \right) = \sum\limits_{i = 1}^{N - 1} {\sum\limits_{j = i + 1}^N {{Q_{{S_{j - 1}} + 1}}} }  = \sum\limits_{k = 1}^{N - 1} {k{Q_{{S_k} + 1}}}.
\end{equation}
The total delivery rate for the second part of the proposed coded delivery phase is found by summing up the rates of the three parts, i.e., 
\begin{equation}\label{DeliveryRateHeterogenousPartTwoDef} {R_{{{\rm{GBD}}_2}}}\left( M_{[1:K]}  \right) \buildrel \Delta \over = \sum_{i=1}^3 R_{{{\rm{GBD}}_2}}^i\left( M_{[1:K]}  \right).
\end{equation}
By substituting \eqref{DeliveryRateHeterogenousPartTwoOne}, \eqref{DeliveryRateHeterogenousPartTwoTwo}, and \eqref{DeliveryRateHeterogenousPartTwoThreeSim} into \eqref{DeliveryRateHeterogenousPartTwoDef}, we get 
\begin{equation}\label{DeliveryRateHeterogenousPartTwo} 
{R_{{{\rm{GBD}}_2}}}\left( M_{[1:K]} \right) = N\sum\limits_{j = 1}^N {\sum\limits_{k = {S_{j - 1}} + 2}^{{S_j}} {{Q_k}} }  + \sum\limits_{k = 1}^{N - 1} {k{Q_{{S_k} + 1}}}.
\end{equation}
Note that, in \eqref{DeliveryRateHeterogenousPartTwo}, the coefficient of $Q_{S_k+1}$ is $k$, for $k \in \left[ 0:N-1 \right]$, whereas the coefficient of all other $Q_j$s, $\forall j \in [1:K] \backslash \cal P$, where $\mathcal{P} \buildrel \Delta \over = \left\{ {1,{S_1} + 1,...,{S_{N - 1}} + 1} \right\}$, is $N$. Since $N>K$, the achievable rate for Part 2 of the CODED DELIVERY procedure in Algorithm \ref{DeliveryHeterogenous} is maximized (the worst-case user demands happens) if $Q_k \le Q_j$, for $k \in \mathcal{P}$ and $j \in \left[ 1:K \right] \backslash \mathcal{P}$; or, equivalently, if $M_k \le M_j$, for $k \in \mathcal{P}$ and $j \in \left[ 1:K \right] \backslash \mathcal{P}$. According to the definition of set $\mathcal{P}$, the above condition means that $N$ users with the smallest cache sizes, i.e., users $U_k$, $\forall k \in \mathcal{P}$, will request different files, and belong to distinct groups in the worst-case scenario. 

For simplification, without loss of generality, the users are ordered such that $M_1 \le M_2 \le \cdots \le M_K$. Then, the delivery rate of Part 2 of the CODED DELIVERY procedure is 
\begin{equation}\label{DeliveryRateHeterogenousPartTwoRelable} {R_{{{\rm{GBD}}_2}}}\left( M_{[1:K]} \right) = \sum\limits_{k = 1}^N {\left( {k - 1} \right){Q_k}}  + N\sum\limits_{k = N + 1}^K {{Q_k}}.
\end{equation}
By substituting $Q_k$ in \eqref{DefinitionQjDeliveryRateHeterogenousPartTwo}, we have 
\begin{align}\label{DeliveryRateHeterogenousPartTwoRelableMain} & R_{{{\rm{GBD}}_2}}\left( M_{[1:K]} \right) = \left[ \sum\limits_{k = 1}^N \left( {k - 1} \right)\left( {\frac{{{M_k}}}{{N - {M_k}}}} \right) +\right.\nonumber\\
& \qquad \qquad \quad \left. N\sum\limits_{k = N + 1}^K {\left( {\frac{{{M_k}}}{{N - {M_k}}}} \right)}  \right]\prod\limits_{l = 1}^K {\left( {1 - \frac{{{M_l}}}{N}} \right)}.
\end{align}

Now, we derive the delivery rate for the corresponding part in \cite[Algorithm 1]{WangHeterogenous}, i.e., when the server delivers the bits of the file requested by $U_k$, having been cached only by $U_j$, $\forall k,j \in \left[ 1:K \right]$, such that $j \ne k$. For this case, from \cite[Algorithm 1]{WangHeterogenous}, when $M_1 \le M_2 \le \cdots \le M_K$, we have
\begin{align}\label{DeliveryRateHeterogenousPartTwoTheirs} R_{{b_2}}\left( M_{[1:K]} \right) =& \left[ \sum\limits_{k = 1}^K \left( {k - 1} \right)\left( {\frac{{{M_k}}}{{N - {M_k}}}} \right)   \right]\prod\limits_{l = 1}^K {\left( {1 - \frac{{{M_l}}}{N}} \right)}.
\end{align}
Hence, the difference between the delivery rates for the second part of the proposed coded delivery phase and its counterpart in \cite[Algorithm 1]{WangHeterogenous} is given by
\begin{align}\label{DiffDeliveryRateHeterogenousPartTwo} &\Delta {R_2}\left( M_{[1:K]} \right) \buildrel \Delta \over = R_{{b_2}}\left( M_{[1:K]} \right) - R_{{{\rm{GBD}}_2}}\left( M_{[1:K]} \right) \nonumber\\
& \quad = \left[ {\sum\limits_{k = 1}^{K - N} {\left( {k - 1} \right)\left( {\frac{{{M_{k + N}}}}{{N - {M_{k + N}}}}} \right)} } \right]\prod\limits_{l = 1}^K {\left( {1 - \frac{{{M_l}}}{N}} \right)}.
\end{align}

Part 3 of the CODED DELIVERY procedure in Algorithm \ref{DeliveryHeterogenous} is the same as its counterpart in \cite[Algorithm 1]{WangHeterogenous}; so, they achieve the same delivery rate. Based on \cite[Theorem 3]{WangHeterogenous}, assuming that $M_1 \le M_2 \le \cdots \le M_K$, the delivery rate for the CODED DELIVERY procedure is 
\begin{align}\label{OurDeliveryRateHeterogenousLastProof} R_{\rm{CD}}\left( M_{[1:K]} \right) \buildrel \Delta \over = & \sum\limits_{i = 1}^K {\left[ {\prod\limits_{j = 1}^i {\left( {1 - \frac{{{M_j}}}{N}} \right)} } \right]} \nonumber\\
& - \Delta {R_1}\left( M_{[1:K]} \right) - \Delta {R_2}\left( M_{[1:K]} \right),
\end{align}
where $\Delta {R_1}\left( M_{[1:K]} \right)$ and $\Delta {R_2}\left( M_{[1:K]} \right)$ are as given in \eqref{DiffDeliveryRateHeterogenousPartOne} and \eqref{DiffDeliveryRateHeterogenousPartTwo}, respectively.

Now, consider the RANDOM DELIVERY procedure in Algorithm \ref{DeliveryHeterogenous}. Each delivered message in this procedure is directly targeted for the users in a group requesting the same file. It is assumed that the users in $\mathcal{G}_i$ are ordered to have increasing cache capacities, such that ${M_{{S_{i - 1}} + 1}} \le {M_{{S_{i - 1}} + 2}} \le \cdots \le {M_{{S_i}}}$, for $i=1, ..., N$. Since each user in $\mathcal G_i$ requires at most $\left( 1-M_{{S_{i - 1}} + 1}/N \right)F$ bits to get its requested file, a total number of $\left( 1-M_{{S_{i - 1}} + 1}/N \right)F$ bits, obtained from random linear combinations of $W_i$, are sufficient to enable the users in $\mathcal{G}_i$ to decode their request $W_i$. Hence, the delivery rate for the RANDOM DELIVERY procedure in Algorithm \ref{DeliveryHeterogenous} is 
\begin{equation}\label{DeliveryRateHeterogenousUncoded} {R_{\rm{RD}}}\left( M_{[1:K]} \right) \buildrel \Delta \over = \sum\limits_{i = 1}^N {\left( {1 - \frac{{{M_{{S_{i - 1}} + 1}}}}{N}} \right)}. 
\end{equation}
Observe that the worst-case user demand combination corresponding to delivery rate ${R_{\rm{RD}}}\left( M_{[1:K]} \right)$ happens (i.e., the delivery rate ${R_{\rm{RD}}}\left( M_{[1:K]} \right)$ is maximized) when $\left\{ M_j, \forall j \in \mathcal{P} \right\}$ forms the set of $N$ smallest cache capacities, i.e., the $N$ users with the smallest cache capacities should request different files, which is consistent with the worst-case user demand combination corresponding to ${R_{\rm{CD}}}\left( M_{[1:K]} \right)$. If the users are labelled such that $M_1 \le M_2 \le \cdots \le M_K$, then we have
\begin{equation}\label{DeliveryRateHeterogenousUncodedMain} {R_{\rm{RD}}}\left( M_{[1:K]} \right) = \sum\limits_{i = 1}^N {\left( {1 - \frac{{{M_i}}}{N}} \right)}.  
\end{equation}

We emphasize here that, before starting the \textit{delivery phase}, it is assumed that each user sends its demand, $d_k$, together with its cache contents, $Z_k$, to the server. With this information, the server can perform the delivery procedure which requiers a smaller delivery rate (by comparing \eqref{OurDeliveryRateHeterogenousLastProof} and \eqref{DeliveryRateHeterogenousUncodedMain}), and the following delivery rate is achievable:
\begin{equation}\label{OurDeliveryRateHeterogenousLastProofLast} R_{\rm{GBD}}\left( M_{[1:K]} \right) \buildrel \Delta \over = \min \left\{ R_{\rm{CD}}\left( M_{[1:K]} \right), R_{\rm{RD}}\left( M_{[1:K]} \right) \right\},
\end{equation}
which completes the proof of Theorem \ref{TheDelRateDecDistCacheSizes}.

\section{Proof of Theorem \ref{LowerBound}}\label{ProofLowerBound}
Our lower bound follows the techniques used in \cite{SenguptaCaching} to derive a lower bound for the setting with uniform cache capacities. For $s \in \left[ 1:K \right]$, it is assumed that the demands of the first $s$ users are $\left( d_{[1:s]} \right)=\left( 1, ..., s \right)$, and the remaining $\left( K-s \right)$ users have arbitrary demands $d_k \in \left[ 1:N \right]$, $\forall k \in \left[ s+1:K \right]$. The server delivers $X_1 = \psi \left(W_{[1:N]}, 1, ..., s, d_{[s+1:K]} \right)$ to serve this demand combination. Now, consider the user demands $\left( d_{[1:s]} \right)=\left( s+1, ..., 2s \right)$, and $d_k \in \left[ 1:N \right]$, $\forall k \in \left[ s+1:K \right]$, and the message $X_2 = \psi \left(W_{[1:N]}, s+1, ..., 2s, d_{[s+1:K]} \right)$ delivered by the server to satisfy this demand combination. Consequently, considering the common messages $X_{\left[1:\left\lceil {N/s} \right\rceil\right]}$ along with the cache contents $Z_{[1:s]}$, the whole database $\left\{ W_{[1:N]} \right\}$ can be recovered. We have
\begin{subequations}
\label{ProofTheorem2Procedure1}
\begin{align}\label{ProofTheorem2Procedure11}
& NF  \le H\left( {{Z_{\left[ {1:s} \right]}},{X_{\left[ {1:\left\lceil {N/s} \right\rceil } \right]}}} \right) \\
& = H\left( {{Z_{\left[ {1:s} \right]}}} \right) + H\left( {{X_{\left[ {1:\left\lceil {N/s} \right\rceil } \right]}}\left| {{Z_{\left[ {1:s} \right]}}} \right.} \right)\\
&\le \sum\limits_{i = 1}^s {{M_i}}F  + H\left( {{X_{\left[ {1:\left\lceil {N/s} \right\rceil } \right]}}\left| {{Z_{\left[ {1:s} \right]}}} \right.} \right) \label{ProofTheorem2Procedure12}\\
& = \sum\limits_{i = 1}^s {{M_i}}F  + H\left( {{X_{\left[ {1:l} \right]}}\left| {{Z_{\left[ {1:s} \right]}}} \right.} \right) \nonumber\\
& + H\left( {{X_{\left[ {l + 1:\left\lceil {N/s} \right\rceil } \right]}}\left| {{Z_{\left[ {1:s} \right]}},{X_{\left[ {1:l} \right]}}} \right.} \right)\label{ProofTheorem2Procedure13}\\
& \le \sum\limits_{i = 1}^s {{M_i}}F  + lF{R^*}\left( M_{[1:K]}  \right) \nonumber\\
& + H\left( {{X_{\left[ {l + 1:\left\lceil {N/s} \right\rceil } \right]}}\left| {{Z_{\left[ {1:s} \right]}},{X_{\left[ {1:l} \right]}}} \right.} \right)\label{ProofTheorem2Procedure14}\\
& = \sum\limits_{i = 1}^s {{M_i}}F  + lF{R^*}\left( M_{[1:K]} \right) \nonumber\\
&+ H\left( {{X_{\left[ {l + 1:\left\lceil {N/s} \right\rceil } \right]}}\left| {{Z_{\left[ {1:s} \right]}},{X_{\left[ {1:l} \right]}},{W_{\left[ {1:ls} \right]}}} \right.} \right) \nonumber\\
&+ I \left( {X_{\left[ {l + 1:\left\lceil {N/s} \right\rceil } \right]} ; W_{\left[ {1:ls} \right]} \left| {{Z_{\left[ {1:s} \right]}},{X_{\left[ {1:l} \right]}}} \right.}   \right)\label{ProofTheorem2Procedure15}
\end{align}

\begin{align}
& \le \sum\limits_{i = 1}^s {{M_i}}F  + lF{R^*}\left( M_{[1:K]} \right) \nonumber\\
& + H\left( {{X_{\left[ {l + 1:\left\lceil {N/s} \right\rceil } \right]}}\left| {{Z_{\left[ {1:s} \right]}},{X_{\left[ {1:l} \right]}},{W_{\left[ {1:ls} \right]}}} \right.} \right) \nonumber\\
& + H\left( {{W_{\left[ {1:ls} \right]}}\left| {{Z_{\left[ {1:s} \right]}},{X_{\left[ {1:l} \right]}}} \right.} \right)\label{ProofTheorem2Procedure16}\\
& \le \sum\limits_{i = 1}^s {{M_i}}F  + lF{R^*}\left( M_{[1:K]} \right)+ \varepsilon l s F + 1 \nonumber\\
&+ H\left( {{X_{\left[ {l + 1:\left\lceil {N/s} \right\rceil } \right]}}\left| {{Z_{\left[ {1:s} \right]}},{X_{\left[ {1:l} \right]}},{W_{\left[ {1:ls} \right]}}} \right.} \right) \label{ProofTheorem2Procedure17}\\
& \le \sum\limits_{i = 1}^s {{M_i}}F  + lF{R^*}\left( M_{[1:K]} \right) + \varepsilon l s F + 1\nonumber\\
& + H\left( {{X_{\left[ {l + 1:\left\lceil {N/s} \right\rceil } \right]}},{Z_{\left[ {s + 1:s + \gamma } \right]}}\left| {{Z_{\left[ {1:s} \right]}},{X_{\left[ {1:l} \right]}},{W_{\left[ {1:ls} \right]}}} \right.} \right) \label{ProofTheorem2Procedure18}\\
& = \sum\limits_{i = 1}^s {{M_i}}F  + lF{R^*}\left( M_{[1:K]} \right) + \varepsilon l s F + 1 \nonumber\\
& + H\left( {{Z_{\left[ {s + 1:s + \gamma } \right]}}\left| {{Z_{\left[ {1:s} \right]}},{X_{\left[ {1:l} \right]}},{W_{\left[ {1:ls} \right]}}} \right.} \right) \nonumber\\
& + H\left( {{X_{\left[ {l + 1:\left\lceil {N/s} \right\rceil } \right]}}\left| {{Z_{\left[ {1:s + \gamma } \right]}},{X_{\left[ {1:l} \right]}},{W_{\left[ {1:ls} \right]}}} \right.} \right)\label{ProofTheorem2Procedure19}\\
& \le \sum\limits_{i = 1}^s {{M_i}}F  + lF{R^*}\left( M_{[1:K]}  \right) + \varepsilon l s F + 1\nonumber\\
& + H\left( {{Z_{\left[ {s + 1:s + \gamma } \right]}}\left| {{Z_{\left[ {1:s} \right]}},{W_{\left[ {1:ls} \right]}}} \right.} \right) \nonumber\\
&+ H\left( {{X_{\left[ {l + 1:\left\lceil {N/s} \right\rceil } \right]}}\left| {{Z_{\left[ {1:s + \gamma } \right]}},{X_{\left[ {1:l} \right]}},{W_{\left[ {1:ls} \right]}}} \right.} \right)\label{ProofTheorem2Procedure110}\\
& = \sum\limits_{i = 1}^s {{M_i}}F  + lF{R^*}\left( M_{[1:K]}  \right) + \varepsilon l s F + 1 \nonumber\\
& + H\left( {{Z_{\left[ {1:s + \gamma } \right]}}\left| {{W_{\left[ {1:ls} \right]}}} \right.} \right) - H\left( {{Z_{\left[ {1:s} \right]}}\left| {{W_{\left[ {1:ls} \right]}}} \right.} \right) \nonumber\\
& + H\left( {{X_{\left[ {l + 1:\left\lceil {N/s} \right\rceil } \right]}}\left| {{Z_{\left[ {1:s + \gamma } \right]}},{X_{\left[ {1:l} \right]}},{W_{\left[ {1:ls} \right]}}} \right.} \right), \label{ProofTheorem2Procedure111}
\end{align}
\end{subequations} 
where $H(\cdot)$ denotes the entropy function, while $I(\cdot;\cdot)$ represents the mutual information; \eqref{ProofTheorem2Procedure13} follows from the chain rule of mutual information; \eqref{ProofTheorem2Procedure14} follows from bounding the entropy of $l$ common messages $X_{[1:l]}$ given the cache contents $Z_{[1:s]}$ by $lF{R^*}\left( M_{[1:K]} \right)$; \eqref{ProofTheorem2Procedure15} is due to the definition of the mutual information; \eqref{ProofTheorem2Procedure16} follows from the nonnegativity of entropy; \eqref{ProofTheorem2Procedure17} is obtained from Fano's inequality; and \eqref{ProofTheorem2Procedure18} also follows from the nonnegativity of entropy.  

In (d), $\gamma \le K-s$ cache contents, $Z_{\left[ s+1:s+\gamma \right]}$, are inserted inside the entropy. Note that from $Z_{\left[ s+1:s+\gamma \right]}$ together with messages $X_{[1:l]}$ the remaining $N-ls$ files in the database can be decoded. Since by each transmission $X_i$ along with the caches $Z_{[1:s+\gamma]}$, $\left( s+\gamma \right)$ files can be decoded, we have $s+\gamma \le \left\lceil N/l \right\rceil$ for $l$ number of transmissions. Hence, we have
\begin{equation}\label{GammaDefinition}
\gamma  = \min \left\{ {{{\left( {\left\lceil {\frac{N}{l}} \right\rceil  - s} \right)}^ + },K - s} \right\}.
\end{equation}
From the argument in \cite[Appendix A]{SenguptaCaching}, it can be verified that
\begin{equation}\label{BoundLastTerm}
H\left( {{X_{\left[ {l + 1:\left\lceil {N/s} \right\rceil } \right]}}\left| {{Z_{\left[ {1:s + \gamma } \right]}},{X_{\left[ {1:l} \right]}},{W_{\left[ {1:ls} \right]}}} \right.} \right) \le \left( {N - Kl} \right)^+ F.
\end{equation}
Based on \eqref{ProofTheorem2Procedure1} and \eqref{BoundLastTerm}, we have 
\begin{align}\label{ProofTheorem2Procedure2}
& NF \le \sum\limits_{i = 1}^s {{M_i}}F - H\left( {{Z_{\left[ {1:s} \right]}}\left| {{W_{\left[ {1:ls} \right]}}} \right.} \right) + lF{R^*}\left( M_{[1:K]} \right) \nonumber\\
&\; + H\left( {{Z_{\left[ {1:s + \gamma } \right]}}\left| {{W_{\left[ {1:ls} \right]}}} \right.} \right) + \left( {N - Kl} \right)^+ F + \varepsilon l s F + 1.
\end{align}
Accordingly, for any set $\mathcal{J} \subset \left[ {1:s + \gamma } \right]$ with $\left| \mathcal{J} \right| = s$, the following inequality can be derived by choosing a set of caches $\left\{ Z_\mathcal{J} \right\} = \left\{ \bigcup\limits_{k \in \mathcal{J}} {{Z_k}}  \right\}$ that allows decoding the files in the database along with $l$ common messages $X_{[1:l]}$: 
\begin{align}\label{ProofTheorem2Procedure3}
& NF \le \sum\limits_{i \in \mathcal{J}} {{M_i}}F  - H\left( {{Z_\mathcal{J}}\left| {{W_{\left[ {1:ls} \right]}}} \right.} \right) + lF{R^*}\left( M_{[1:K]} \right)  \nonumber\\
& \; \; + H\left( {{Z_{\left[ {1:s + \gamma } \right]}}\left| {{W_{\left[ {1:ls} \right]}}} \right.} \right) + {\left( {N - Kl} \right)}^+ F + \varepsilon l s F + 1.
\end{align}
Hence, there are a total number of $\binom{s+\gamma}{s}$ inequalities, each corresponding a different set $\mathcal{J}$. By taking average over all the inequalities, it can be evaluated that
\begin{align}\label{ProofTheorem2Procedure4}
NF \le & \frac{s}{{s + \gamma }}\sum\limits_{i = 1}^{s + \gamma } {{M_i}}F  - \sum\limits_{\scriptstyle \mathcal{J} \subset \left[ {1:s + \gamma } \right],\hfill\atop
\;\;\;\;\; \scriptstyle\left| \mathcal{J} \right| = s\hfill} {\frac{{H\left( {{Z_\mathcal{J}}\left| {{W_{\left[ {1:ls} \right]}}} \right.} \right)}}{{\binom{s+\gamma}{s}}}} \nonumber\\
&+ lF{R^*}\left( M_{[1:K]}  \right) + H\left( {{Z_{\left[ {1:s + \gamma } \right]}}\left| {{W_{\left[ {1:ls} \right]}}} \right.} \right) \nonumber\\
& + {\left( {N - Kl} \right)^+ }F + \varepsilon l s F + 1.
\end{align}
By applying Han's inequality \cite[Theorem 17.6.1]{CoverInformation}, we have 

\begin{align}\label{HenInequality}
\sum\limits_{\scriptstyle \mathcal{J} \subset \left[ {1:s + \gamma } \right],\hfill\atop
\;\;\;\;\; \scriptstyle\left| \mathcal{J} \right| = s\hfill} {\frac{{H\left( {{Z_\mathcal{J}}\left| {{W_{\left[ {1:ls} \right]}}} \right.} \right)}}{{\binom{s+\gamma}{s}}}}  \ge \frac{s}{{s + \gamma }}H\left( {{Z_{\left[ {1:s + \gamma } \right]}}\left| {{W_{\left[ {1:ls} \right]}}} \right.} \right).
\end{align}
Accordingly, the following lower bound can be derived:
\begin{align}\label{ProofTheorem2Procedure5}
NF \le & \frac{s}{{s + \gamma }}\sum\limits_{i = 1}^{s + \gamma } {{M_i}}F  + \frac{\gamma }{{s + \gamma }}H\left( {{Z_{\left[ {1:s + \gamma } \right]}}\left| {{W_{\left[ {1:ls} \right]}}} \right.} \right) \nonumber\\
& + lF{R^*}\left( M_{[1:K]}  \right) + {\left( {N - Kl} \right)^+ }F + \varepsilon l s F + 1.
\end{align}
It is shown in \cite[Appendix A]{SenguptaCaching} that
\begin{equation}\label{BoundMiddleTerm}
H\left( {{Z_{\left[ {1:s + \gamma } \right]}}\left| {{W_{\left[ {1:ls} \right]}}} \right.} \right) \le {\left( {N - ls} \right)^+ }F.
\end{equation}
From \eqref{ProofTheorem2Procedure5} and \eqref{BoundMiddleTerm}, we can obtain
\begin{align}\label{ProofTheorem2Procedure6}
N \le & \frac{s}{{s + \gamma }}\sum\limits_{i = 1}^{s + \gamma } {{M_i}}  + \frac{\gamma {\left( {N - ls} \right)^ + }}{{s + \gamma }} + l{R^*}\left( M_{[1:K]} \right) \nonumber\\
&+ {\left( {N - Kl} \right)^+ } + \varepsilon l s + \frac{1}{F}.
\end{align}
For $F$ large enough, $\varepsilon > 0$ is arbitrary close to zero. As a result, we have
\begin{align}\label{ProofTheorem2Procedure7}
& R^*\left( M_{[1:K]} \right) \ge \frac{1}{l}\times \nonumber\\
& \left( {N - \frac{s}{{s + \gamma }}\sum\limits_{i = 1}^{s + \gamma } {{M_i}}  - \frac{\gamma {{\left( {N - ls} \right)}^ + }}{{s + \gamma }} - {{\left( {N - Kl} \right)}^ + }} \right).
\end{align}
By optimizing over all parameters $s$, $l$, and $\gamma$, and re-ordering the users such that $M_1 \le M_2 \le \cdots \le M_K$ without loss of generality, we have
\begin{align}\label{ProofTheorem2Procedure8} 
& {R^*}\left( M_{[1:K]} \right) \ge {R_{\rm{LB}}}\left( M_{[1:K]} \right) =  \mathop {\max }\limits_{\scriptstyle\;\;s \in \left[ {1:K} \right],\hfill\atop
\scriptstyle l \in \left[ {1:\left\lceil {N/s} \right\rceil } \right]\hfill} \frac{1}{l}\times \nonumber\\
& \left\{ N - \frac{s}{{s + \gamma }}\sum\limits_{i = 1}^{s + \gamma} M_i - \frac{{\gamma {{\left( {N - ls} \right)}^+ }}}{{s + \gamma }} - {\left( {N - Kl} \right)}^+  \right\}.
\end{align}
Note that, the first $\left( s+\gamma \right)$ users have smaller cache capacities compared to all the other users. We can argue that the lower bound given in \eqref{ProofTheorem2Procedure8} is optimized over the set of cache capacities.

\bibliographystyle{IEEEtran}
\bibliography{Report}

\end{document}